\title{Classical and Quantum Algorithms for \\
  Testing Equivalence of Group Extensions}
\author{Kevin C. Zatloukal \\ University of Washington}
\newcommand{\poly}{{\rm poly}}
\begin{document}

\maketitle

\begin{abstract}
While efficient algorithms are known for solving many important problems related
to groups, no efficient algorithm is known for determining whether two arbitrary
groups are isomorphic. The particular case of 2-nilpotent groups, a special type
of central extension, is widely believed to contain the essential hard cases.
However, looking specifically at central extensions, the natural formulation of
being ``the same'' is not isomorphism but rather ``equivalence,'' which requires
an isomorphism to preserves the structure of the extension. In this paper, we
show that equivalence of central extensions can be computed efficiently on a
classical computer when the groups are small enough to be given by their
multiplication tables. However, in the model of black box groups, which allows
the groups to be much larger, we show that equivalence can be computed
efficiently on a quantum computer but not a classical one (under common
complexity assumptions). Our quantum algorithm demonstrates a new application of
the hidden subgroup problem for general abelian groups.
\end{abstract}

\section{Introduction}

Finding an efficient algorithm for group isomorphism is one of the most notable
open problems in computational group theory. While the problem is easily solved
for abelian groups, the problem remains unsolved even for some very simple
generalizations to non-abelian groups. In particular, the 2-nilpotent groups,
which are central extensions of an abelian group by another abelian group, are
widely believed to contain the essential hard cases (see e.g.
\cite{TwoNilGroupIso}). Hence, the computational issues surrounding this type of
group extension merit further study.

While isomorphism is the natural notion of what it means to be the same group,
the natural notion of being the same extension is slightly different. Indeed,
the theory of group extensions\footnote{See the chapter in \cite{IntroGroups}
for a nice introduction to the theory of group extensions.}, whose study began
near the start of the 20th century, defines two extensions to be the same or
``equivalent'' if there exists an isomorphism that preserves the structure of
the extension. (We will define this precisely in the next section.)

Thus, it is interesting to consider whether there exists an efficient algorithm
for testing equivalence of those extensions for which isomorphism appears
difficult. In this paper, we will see that there is indeed an efficient
algorithm.

Group isomorphism has drawn particular interest from the quantum computing
community due to its placement in the hierarchy of complexity classes. In
particular, due to the work of \cite{GroupIsoNPcoNP}, we know that the
isomorphism problem for solvable groups is almost in the class ${\rm NP} \cap
{\rm coNP}$. This is the class that includes factoring and other problems for
which quantum computers appear to give super-polynomial speedups. Hence, there
is strong interest in determining whether the same is true of solvable group
isomorphism. To date, however, no such quantum speedup is known even for the
smaller class of 2-nilpotent groups.

Given the relationship between the conjectured hard cases of group isomorphism
(2-nilpotent groups) and the problem of extension equivalence, it is natural to
wonder whether the latter problem also could lead to a super-polynomial speedup
of quantum algorithms over classical ones. As noted above, there is an efficient
classical algorithm for testing equivalence. However, its efficiency depends on
the fact that the given groups are small, in particular, small enough to write
down their complete multiplication tables.

The usual setting for the group isomorphism problem has the input groups given
by their multiplication tables. If one cannot solve the problem in this model,
then other models are out of the question. However, it would be both interesting
and useful to be able to test equivalence of larger groups, for which this model
is inappropriate. In particular, for groups of matrices over finite fields
(which includes, for example, simple groups of Lie type), individual matrices
are small enough to multiply and invert efficiently, but writing out a
multiplication table between all matrices in the group would often be
infeasible. Yet, computational group theorists would still like to answer
questions about such groups.

Matrix groups are often studied in the ``black box group'' model. (Indeed, this
was the original motivation for the model.) Hence, it is natural for us to
consider whether there exists an efficient algorithms for testing equivalence of
group extensions in this model.

One case we will consider is extending a group given by a multiplication table
by a black box group.  In practical terms, this means extensions of a small
group by a large one. Such extensions can already introduce substantial
complexity. For example, the dihedral group $D_{2N}$ is an extension of the tiny
group $\Integer_2$ by a potentially large cyclic group $\Integer_N$. Considering
that the hidden subgroup problem can be solved in quantum polynomial time for
$\Integer_N$ but not (currently) for $D_{2N}$, we can see that extensions of
even constant-sized groups can introduce substantial computational difficulty.

In this paper, we show that there is an efficient quantum algorithm for testing
equivalence of extensions of a small group by large abelian group or extensions
of one large abelian group by another large abelian group. Furthermore, we will
show that the existence of an efficient classical algorithm for either of these
cases would break an existing cryptosystem.\footnote{Note that this cryptosystem
depends on the hardness of factoring, so it is already known that quantum
computers could break it. What was not known is the relationship of this to
testing equivalence of extensions.} Hence, under the hardness assumption of that
cryptosystem, no efficient classical algorithm exists.

The quantum algorithm we present depends crucially on the ability to solve the
hidden subgroup problem (HSP) for arbitrary abelian groups. (This is the
essential quantum subroutine in our algorithm.) Interestingly, while some other
problems in computational group theory that can be solved efficiently on a
quantum computer can also be solved classically assuming the existence of
oracles for factoring and/or discrete logarithm, our construction does not
easily translate to that setting because there is no apparent way to solve
abelian HSP classically, even with the help of such oracles. Hence, our work
demonstrates a new and interesting application of efficient quantum algorithms
for abelian HSP.

\subparagraph*{Related Work}

While we are aware of no prior work on the complexity of
determining extension equivalence in these models, our motivation for this
problem comes from the status of the group isomorphism problem for simple group
extensions, and there, it is known that isomorphism can be determined
efficiently on a quantum computer in certain special cases
\cite{SpecialGroupIso}. Interestingly, the groups to which this result applies
have trivial equivalence classes\footnote{This follows from the fact that the
second cohomology groups (defined below) are trivial for semi-direct products.},
so the extension equivalence problem is trivial for such groups. (The answer is
always ``yes''.) The fact that the one class of nonabelian solvable groups for
which we have made progress on group isomorphism is one for which equivalence is
trivial suggests that studying the extension equivalence problem may teach us
something about the hard cases of group isomorphism.

\section{Background}

\subsection{Computational Group Theory}

The study of algorithms and complexity for problems in group theory is called
\emph{computational group theory}. In order to discuss these issues, we must
first specify how the group will be given as input. Multiple approaches have
been defined (see \cite{PermGroupAlg} for a nice review).  We will need to use
three of these in our later discussion.

The first approach is to describe a group $G$ by its multiplication table
(sometimes called the ``Cayley table''). Multiplication of group elements can be
performed by table lookup, inverses can be computed by scanning one row of the
table, and so on. This is perhaps the most natural model. However, in order to
use this approach, the group must be small enough that it is reasonable to write
down a $\abs{G} \times \abs{G}$ table. This turns out to be too limiting for
many computations that practitioners want to perform.

Another approach is the ``black box group'' model of Babai and Szemer\'edi"
\cite{BlackBoxGroups}. In this model, group elements are identified by opaque
strings (which need not be unique) and an oracle is provided that can perform
the following group operations:
\begin{enumerate}
\item Given $g, h \in G$, compute $gh$.
\item Given $g \in G$, compute $g^{-1}$.
\item Given $g \in G$, determine whether $g = e$, the group identity
\footnote{This also allows us to determine whether $g = h$ since this is
equivalent to checking $gh^{-1} = e$.}.
\end{enumerate}
Finally, we have to specify how the algorithm obtains the strings for some group
elements in the first place. It is usual to assume that the input to algorithm
will be a list of generators of the group (i.e., a list of strings identifying
the generators).

While the black box model is restricted in terms of how it can work with the
group, it is even more restricted in terms of what is considered efficient.
Since a multiplication table has size $\tilde{O}(\abs{G}^2)$,\footnote{As is
usual, $\tilde{O}(.)$ is the same as $O(.)$ but with suppressed terms that are
logarithmically smaller than those included.} any running time of $\poly
(\abs{G})$ is efficient in the first model.  On the other hand, a
non-redundant\footnote{This simply means that no proper subset of the generators
still generates the group.} list of generators only has length $O(\log
\abs{G})$,\footnote{This follows from the fact that each additional generator
increase the size of the generated group by a factor equal to the index of the
old group in the new one, and this index (an integer), since it is not 1, must
be at least 2.} so the input has size $O(\log^2 \abs{G})$.  Hence, an
algorithm is efficient in the second model only if it has running time
$\poly(\log \abs{G})$, which is exponentially faster.

It should not be surprising then to find a large difference between which
problems can be solved in the two models. In the first model, almost every
natural group problem can be solved efficiently, the notable exception being the
group isomorphism problem. In the second model, on the other hand, very few
problems can be solved, at least classically. The main example of a problem that
can be solved in this model is computing a derived series for a solvable group
(that is, generators for each group in the series) or a central series for a
nilpotent group.

Interestingly, it is known that quantum algorithms can do more in the black box
model. In particular, for abelian or even solvable groups
\cite{QuantumAlgSolvGroups}, a large number of problems can be solved, the most
important example being computing the size of the group, $\abs{G}$. We will show
later on that the extension equivalence problem is another example.

The other approaches for specifying groups use representations of particular
types. The most common of these, the third model we will need below, is to use a
permutation representation. Specifically, we assume that the group is explicitly
a subgroup of the symmetric group, $G \le S_n$. The input is a set of generators
of $G$, each of which is a permutation of the set $[n] \triangleq \{1, \dots,
n\}$.

As in the black box model, $G$ can be specified by at most $O(\log \abs{G})$
generators. Each generator in the input has size $O(n \log n)$, so the input as
a whole will have size $O(n \log n \log \abs{G})$. For an algorithm to be
efficient then, its running time must be polynomial both in $n$ and $\log
\abs{G}$.  Furthermore, for this model to be useful, the size $n$ of the set,
called the ``degree'' of the representation, must be small. The fact that many
groups have small-degree representations is one factor leading to the great
success of this third approach. The other factor leading to its success is that
many problems can be solved efficiently in this model. In fact, nearly all of
the problems that are solvable with multiplication tables are efficiently
solvable here as well. (See \cite{PermGroupAlg} for a long list of these
problems.)

\subsection{Group Extensions}

A group $E$ is said to be an \emph{extension} of $G$ by $A$ if $A \lhd E$ and $E
/ A \cong G$. This is called a \emph{central extension} if $A \le Z(E)$. In
particular, this means that $A$ is abelian.

Central extensions are in some ways similar to semidirect products in that the
elements can be thought of as pairs $(a, x) \in A \times G$ with a strange
multiplication. Whereas multiplication in a semidirect product depends on a
group homomorphism $G \rightarrow \Aut A$, multiplication in a central extension
depends on a function $f : G \times G \rightarrow A$, where we have $(a,x)(b,y)
= (abf(x,y), xy)$. The function $f$ is called a ``factor set.'' We will describe
some of its properties below. In particular, we will show how to find $f$ for a
given extension $E$.

Central extensions are in some sense the other natural way to combine groups,
aside from semidirect products. In particular, any group extension of $G$ by
$A$, where $A$ is abelian but not necessarily central, is essentially a
combination of a semidirect product and a central extension.\footnote{Any
extension is identified, up to isomorphism, by a homomorphism from $G$ to $\Aut
A$ (the semi-direct product part) and a factor set (the central extension part).
See \cite{IntroGroups} for details.} Hence, these two types represent the two
extremes of extensions of abelian groups.

Finally, we can define the problem we are trying to solve. Two extensions, $E_1$
and $E_2$, of $G$ by $A$ are said to be \emph{equivalent} if there exists an
isomorphism $\gamma : E_1 \rightarrow E_2$ such that $\gamma$ is the identity on
$A$, $\gamma |_A = {\rm id}$, and gives rise to the identity on $G$, that is,
$\pi_2 \circ \gamma = \pi_1$, where $\pi_i : E_i \rightarrow G$ is the canonical
projection. This is the natural sense in which two extensions should be
considered ``the same''.

On the other hand, it is possible for $E_1$ and $E_2$ to be isomorphic even if
they are not equivalent extensions. (Indeed, this is not even a simple matter of
dealing with isomorphisms of $A$ and $G$: it is apparently possible for
extensions of non-isomorphic groups to be isomorphic.) For this reason,
equivalence is a more natural question to consider when looking specifically at
group extensions: an equivalence is an isomorphism that respects the structure
of the group extension.

\subsection{Low Degree Group Cohomology}

Cohomology groups are often defined in an abstract manner (via $\rm Ext$
functors, projective resolutions, etc.). However, in the case of group
cohomology, the low degree cohomology groups also have concrete definitions that
are equivalent but more useful for us.\footnote{Historically, these were
developed in the opposite order.  The concrete definitions came first and the
abstract later.} (See \cite{IntroGroups} for a more detailed discussion.)

In this section, we will consider cohomology only of central extensions.
Cohomology can be defined more generally, but this simpler case is all that we
will need in later sections.

The key group for us is the second cohomology group, $H^2(G, A)$. In order
to define this, however, we first need to define cocycles and coboundaries.

The 1-cocycles, $Z^1(G, A)$, are functions $f : G \rightarrow A$ that satisfy
the identity $f(x) + f(y) - f(xy) = 0$, for all $x, y \in G$.  These are simply
group homomorphisms. (Note that we are using additive notation since $A$ is
abelian.) The 2-cocycles, $Z^2(G, A)$, are functions $f : G \times G \rightarrow
A$ that satisfy the (admittedly odd-looking) identity $f(y,z) - f(xy, z) +
f(x,yz) - f(x,y) = 0$, for all $x, y, z \in G$, and have $f(x,e) = f(y,e) = e$,
for all $x, y \in G$.\footnote{Sometimes cocycles are defined only by the first
condition. Then those that satisfy the second are called ``normalized''. We will
assume throughout this paper that all cocycles, coboundaries, and cochains are
properly normalized.} These are precisely the factor sets mentioned earlier.

The 2-coboundaries, $B^2(G, A)$, are functions $G \times G \rightarrow A$ that
arise by taking a function $s : G \rightarrow A$ only satisfying $s(e) = e$
(called a 1-cochain) by defining $\partial s \in B^2(G, A)$ by $\partial s(x,y)
= s(x) + s(y) - s(xy)$. Note that, since $s$ is not necessarily a homomorphism,
we need not have $\partial s \not\equiv 0$. It is not hard to show that any
function defined in such manner is also a 2-cocycle.  In other words, we have
$B^2(G, A) \le Z^2(G, A)$.  Furthermore, the function $\partial$ is in fact a
(surjective) homomorphism $C^1(G, A) \rightarrow B^2(G, A)$, where $C^1(G, A)$
denotes the space of all cochains.

\begin{figure}[t]
\begin{center}
\begin{tabular}{|l|l|} \hline
1-cochains & $C^1(G,A) = \set{s : G \rightarrow A}{s(e) = e}$ \\ \hline
2-cochains & $C^2(G,A) = \set{f : G \times G \rightarrow A}{f
    \text{ normalized}}$ \\ \hline
cocycles & $Z^2(G,A) = \set{f : G \times G \rightarrow A}{f
    \text{ normalized, cocycle condition}} \subset C^2(G,A)$ \\ \hline
$\partial : C^1 \rightarrow C^2$ & homomorphism taking $s \in C^1(G,A)$ to
    $\partial s \in Z^2(G,A)$ \\ \hline
coboundaries & $B^2(G,A) = \Image \partial \subset Z^2(G,A)$ \\ \hline
\end{tabular}
\end{center}
\caption{The main objects in group cohomology.}
\FigureName{group-cohomology}
\end{figure}

These definitions are summarized in \Figure{group-cohomology}.

The sets $Z^2(G, A)$ and $B^2(G, A)$ are themselves groups with the group
operation performed pointwise (i.e., $(f + g)(x, y) = f(x,y)+ g(x,y)$). In fact,
they are abelian groups since $A$ is abelian. Hence, $B^2(G, A)$ is a normal
subgroup of $Z^2(G, A)$, so we can consider the quotient group $H^2(G, A)
\triangleq Z^2(G, A) / B^2(G, A)$. This is the \emph{second cohomology group}.

The most important fact for us is the relationship between $H^2(G, A)$ and group
extensions.

\begin{lemma}
\LemmaName{Correspondence}
Elements of $H^2(G, A)$ are in 1-to-1 correspondence with equivalence classes of
central extensions of $G$ by $A$.
\end{lemma}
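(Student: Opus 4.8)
The plan is to establish the bijection by constructing explicit maps in both directions between equivalence classes of central extensions and elements of $H^2(G, A)$, and then verifying they are mutually inverse. I would set up the correspondence as follows. Given a central extension $E$ of $G$ by $A$, I first choose a set-theoretic section $s : G \rightarrow E$ of the projection $\pi : E \rightarrow G$, normalized so that $s(e) = e$. Since every element of $E$ can be written uniquely as $a \cdot s(x)$ for some $a \in A$ and $x \in G$, the failure of $s$ to be a homomorphism is measured by an element $f(x,y) \in A$ defined via $s(x) s(y) = f(x,y) \cdot s(xy)$; because $A$ is central, this $f$ is well-defined and lands in $A$. The associativity of multiplication in $E$ then forces $f$ to satisfy the 2-cocycle identity, and the normalization of $s$ forces $f(x,e) = f(e,y) = e$, so $f \in Z^2(G,A)$. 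This gives the forward map on cocycles; to get a map to $H^2(G,A)$, I would show that a different choice of section changes $f$ by a coboundary, so the class $[f] \in H^2(G,A)$ is independent of the section.

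For the reverse direction, given a 2-cocycle $f \in Z^2(G,A)$, I would construct an extension $E_f$ on the underlying set $A \times G$ with the twisted multiplication $(a,x)(b,y) = (a b f(x,y), xy)$, exactly the formula highlighted in the Background section. Here I would verify that the cocycle identity is precisely what makes this operation associative, that $(e,e)$ is the identity (using normalization), and that inverses exist, so $E_f$ is genuinely a group; the inclusion $a \mapsto (a,e)$ realizes $A$ as a central normal subgroup and the projection $(a,x) \mapsto x$ identifies $E_f / A$ with $G$, so $E_f$ is a central extension of $G$ by $A$.

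The crux of the argument is then the two verifications that close the loop. First, I would check that cohomologous cocycles yield equivalent extensions: if $f' = f + \partial s$ for a $1$-cochain $s$, the map $(a,x) \mapsto (a\, s(x), x)$ is an isomorphism $E_f \rightarrow E_{f'}$ that is the identity on $A$ and commutes with the projections to $G$, hence an equivalence of extensions. Second, I would check that the forward and reverse maps are mutually inverse: starting from $E$, extracting $f$ via a section, and rebuilding $E_f$ recovers $E$ up to equivalence (using the section to define the equivalence), while starting from $f$, the canonical section of $E_f$ reproduces $f$ on the nose.

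I expect the main obstacle to be handling the well-definedness and naturality carefully rather than any single hard computation. Specifically, the delicate points are (i) confirming that changing the section alters $f$ exactly by a coboundary $\partial s$ and not something more general, which is what makes $H^2$ (the quotient by $B^2$) the correct invariant, and (ii) checking that the equivalence relation on extensions corresponds precisely to equality in the quotient, with no over- or under-counting. The underlying algebra (associativity $\Leftrightarrow$ cocycle identity, and the coboundary computation) is routine but must be tracked consistently in additive notation for $A$ and multiplicative notation for $G$ and $E$; the conceptual care lies in ensuring each construction respects the normalization conventions so that cochains, cocycles, and coboundaries all match the definitions fixed in \Figure{group-cohomology}.
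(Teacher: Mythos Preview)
Your proposal is correct and follows essentially the same approach as the paper: construct a factor set from a choice of coset representatives via $f(x,y) = s(x)s(y)s(xy)^{-1}$, and argue that different choices of representatives alter $f$ by exactly a coboundary. The paper gives only a brief proof sketch deferring to \cite{IntroGroups} for details, so your version is in fact more complete than what appears there---in particular, the paper does not spell out the reverse construction $E_f$ or the mutual-inverse verification, whereas you do.
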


\begin{proof}[Proof Sketch]
While we need not go through this proof in detail (see \cite{IntroGroups} for
full details), we do need describe how the correspondence works since our aim is
to work in the group $H^2(G, A)$, using the elements corresponding to the two
given extensions.

For an extension $E$ of $G$ by $A$, choose a representative of each coset of $A$
in $E$ (i.e., each element of $G \cong E / A$), where we require $e$ to
represent $A$ itself. Encode these choices into a function $s : G \rightarrow
E$. Then we can define a function $f : G \times G \rightarrow A$ by $f(x,y)
\triangleq s(x) s(y) s(xy)^{-1}$. It is not hard to show that $f(x,y) \in A$ and
that $f$ is a factor set, i.e., $f \in Z^2(G, A)$.

This construction depends on the choice of representatives. Choosing a different
set of representatives, we could get a different factor set $g : G \times G
\rightarrow A$. However, if we do this, it will turn out $f - g$ is a
2-coboundary. Furthermore, the only other factor sets differing from $f$ by a
coboundary arise from other choices of representatives for the same extension.
Hence, $f + B^2(G, A)$ uniquely represents this extension.
\end{proof}

\section{Results}

\subsection{General Approach} \label{ssec:approach}

With this background, the basic idea for computing equivalence of central
extensions is simple. Given $E_1$ and $E_2$, two central extensions of $G$ by
$A$, we can compute the factor sets $f_1, f_2 \in Z^2(G, A)$ for these two
extensions using any set of representatives. As described in
\Lemma{Correspondence}, the factor sets correspond to the same extension iff
$f_1 - f_2 \in B^2(G, A)$. Thus, the general approach is to reduce extension
equivalence to testing membership in $B^2(G, A)$.

To make this concrete, we must specify what approach we use for representing
groups. Below, we present two algorithms, one classical and one quantum, for
implementing the outline just described. These algorithms differ in the approach
used to specify the input groups, with the quantum algorithm using the more
general approach of black box groups for $A$ and $E$. Specifically, we have the
following results.

\begin{theorem}
\TheoremName{classical-test}
There exists a (classical) Monte Carlo algorithm for testing the equivalence of
$E_1$ and $E_2$, two extensions of $G$ by $A$, when all groups specified by
multiplication tables, running in time $\tilde{O}(\abs{G}^6 \abs{A}^3)$.
\end{theorem}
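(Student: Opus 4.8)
The plan is to implement the reduction outlined in Section~\ref{ssec:approach}: by \Lemma{Correspondence}, the extensions $E_1$ and $E_2$ are equivalent if and only if their factor sets $f_1, f_2 \in Z^2(G,A)$ satisfy $f_1 - f_2 \in B^2(G,A)$. So the algorithm will have two phases: first compute $h \triangleq f_1 - f_2$, and then decide membership of $h$ in the coboundary subgroup $B^2(G,A)$. The first phase is elementary; the entire difficulty, and essentially all of the running time, will lie in the second.

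For the first phase, I would fix for each $E_i$ a section $s_i : G \to E_i$ by scanning the multiplication table of $E_i$ and choosing one representative of each coset of $A$, taking $s_i(e) = e$ as required by normalization. Precomputing, for each element of $E_i$, the element of $A$ it represents (when it lies in $A$) takes time $\tilde{O}(\abs{A}\abs{G})$. Then for each of the $\abs{G}^2$ pairs $(x,y)$ I would compute $f_i(x,y) = s_i(x)\, s_i(y)\, s_i(xy)^{-1}$ by table lookup and read off the corresponding element of $A$; subtracting pointwise gives $h$. This phase costs only $\tilde{O}(\abs{G}^2 + \abs{A}\abs{G})$, which is dominated by the second.

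For the second phase, observe that $h \in B^2(G,A)$ exactly when the system $\partial s = h$ has a solution, i.e.\ when there is a $1$-cochain $s \in C^1(G,A)$ with $s(x) + s(y) - s(xy) = h(x,y)$ for all $x,y \in G$. This is a system of $(\abs{G}-1)^2$ equations in the $\abs{G}-1$ unknowns $\{s(g) : g \neq e\}$, all valued in the finite abelian group $A$. To solve it I would first compute the invariant-factor decomposition $A \cong \Integer_{d_1} \times \cdots \times \Integer_{d_k}$ from the multiplication table of $A$, where $k = O(\log \abs{A})$. Because $\partial$ acts coordinatewise on $A$, the system then splits into $k$ independent systems $D x \equiv h^{(i)} \pmod{d_i}$, all sharing a single integer ``coboundary matrix'' $D$ whose entries lie in $\{-1,0,1,2\}$ and which does not depend on $A$ at all. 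Computing a Smith normal form $D = USV$ once over $\Integer$ reduces each of the $k$ solvability questions to checking the divisibility conditions $\gcd(\sigma_j, d_i) \mid (U^{-1} h^{(i)})_j$ on the elementary divisors $\sigma_j$, together with the vanishing of $U^{-1} h^{(i)}$ on the zero rows of $S$; the two extensions are equivalent precisely when all $k$ of these systems are consistent.

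The hard part will be that this is linear algebra over a finite abelian group rather than over a field: Gaussian elimination is not directly available modulo a composite $d_i$, so one must work through a Smith (or Hermite) normal form and control both the growth of the unimodular transforms and the arithmetic performed in $A$ via its table. The system is moreover highly overdetermined, with $\Theta(\abs{G}^2)$ consistency equations in only $\Theta(\abs{G})$ unknowns, so it is the normal-form computation on the $\Theta(\abs{G}^2) \times \Theta(\abs{G})$ matrix $D$, together with the reduction of the $k$ right-hand sides against it, that dominates; bounding these steps crudely (treating the cochain space $C^2(G,A)$ as $\Theta(\abs{G}^2)$-dimensional and charging arithmetic in $A$ and its subgroups directly to the multiplication table) is what I expect to yield the stated $\tilde{O}(\abs{G}^6 \abs{A}^3)$. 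Finally, I expect the ``Monte Carlo'' qualifier to enter precisely through this abelian linear-algebra subroutine: the rank, normal-form, and consistency tests over the $\Integer_{d_i}$ can be carried out by a randomized procedure with one-sided error, which is what makes the overall algorithm Monte Carlo rather than deterministic.
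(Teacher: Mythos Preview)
Your reduction to solving $\partial s = h$ is correct, but your method for the membership test is genuinely different from the paper's. The paper does \emph{not} decompose $A$ and do Smith normal form over the $\Integer_{d_i}$; instead it builds a faithful \emph{permutation representation} of $C^2(G,A)$ of degree $n = \abs{A}\,\abs{G}^2$ (the regular representation of $A$ on each of the $\abs{G}^2$ coordinates), pushes a random $O(\abs{G}\log\abs{A})$-element generating set of $C^1(G,A)$ through $\partial$ to get generators for $B^2(G,A)$, and then invokes a black-box permutation-group membership test (strong generating set construction) running in $\tilde{O}(n^3)$. That is precisely where both the $\tilde{O}(\abs{G}^6\abs{A}^3)$ bound and the Monte Carlo qualifier come from: the $\tilde{O}(n^3)$ Sims-type algorithm and the random choice of generators for $A$. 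Your linear-algebra route is perfectly sound and arguably more natural here since $A$ is abelian; indeed, done carefully it should beat the stated bound, since the SNF is of an $O(\abs{G}^2)\times O(\abs{G})$ integer matrix with tiny entries and only $O(\log\abs{A})$ right-hand sides need to be checked. The weak spots in your write-up are exactly the places where you try to reverse-engineer the paper's bound: there is no reason your SNF computation should cost $\abs{A}^3$, and your guess that the randomness enters through ``rank, normal-form, and consistency tests over $\Integer_{d_i}$'' is off, as all of that is deterministic.
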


\begin{theorem}
\TheoremName{quantum-test}
There exists a quantum algorithm for testing the equivalence of $E_1$ and $E_2$,
two extensions of $G$ by $A$, where $A$, $E_1$, and $E_2$ are given as black box
groups and $G$ is given by a multiplication table, running in time $O(\abs{G}^6
\log^6 \abs{A})$.
\end{theorem}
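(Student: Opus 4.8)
The plan is to follow the general approach from Section~\ref{ssec:approach}: compute factor sets $f_1, f_2 \in Z^2(G, A)$ for the two extensions and then test whether $f_1 - f_2 \in B^2(G, A)$. The challenge, relative to the classical case, is that $A$, $E_1$, and $E_2$ are now black box groups, so we cannot simply tabulate everything. First I would need a concrete, efficiently computable handle on the abelian group $A$. Since $A$ is given as a black box group, I would invoke the known quantum algorithm for abelian (and even solvable) black box groups to compute a decomposition of $A$ into a direct product of cyclic groups, $A \cong \Integer_{n_1} \times \cdots \times \Integer_{n_k}$, together with an efficient isomorphism allowing us to convert between black box elements of $A$ and their coordinate vectors. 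This is where the essential quantum subroutine --- abelian HSP --- enters, and it is what has no apparent classical analogue even given factoring/discrete-log oracles.

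Once $A$ is understood explicitly, I would compute the factor sets. For each of the two extensions $E_i$, choose coset representatives $s_i : G \to E_i$ of $A$ in $E_i$ (requiring $s_i(e)$ to represent $A$ itself), and define $f_i(x,y) = s_i(x) s_i(y) s_i(xy)^{-1}$, which lies in $A$ by construction. The subtlety here is obtaining the representatives as black box elements: given a black box element of $E_i$, I must be able to determine which coset of $A$ it lies in, i.e., compute the projection $\pi_i : E_i \to G$. Since $G$ is given by a multiplication table (hence is small) and $A \le Z(E_i)$, I would use membership testing in $A$ (again leveraging the abelian black box machinery, e.g., by decomposing $E_i$ or directly testing $g \in A$ via the group operations) to sort elements into cosets and fix one representative per element of $G$. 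With the representatives in hand, each value $f_i(x,y)$ is a black box element of $A$, which I then convert to its coordinate vector in $\Integer_{n_1} \times \cdots \times \Integer_{n_k}$. There are $O(\abs{G}^2)$ such values per extension, so this is polynomial in $\abs{G}$ and $\log \abs{A}$.

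The final and most delicate step is testing membership of $f_1 - f_2$ in $B^2(G, A)$. Having converted to coordinates, I treat $f_1 - f_2$ as an element of $Z^2(G, A)$, now realized concretely as a subgroup of the abelian group $C^2(G, A) \cong A^{\,(\abs{G}-1)^2}$ (normalized cochains, indexed by pairs with neither argument the identity). The coboundary map $\partial : C^1(G, A) \to C^2(G, A)$, with $\partial s(x,y) = s(x) + s(y) - s(xy)$, is a homomorphism of finite abelian groups whose domain $C^1(G, A) \cong A^{\abs{G}-1}$ is explicitly presented. The problem thus reduces to deciding whether the fixed vector $f_1 - f_2$ lies in the image of $\partial$, i.e., whether the linear system $\partial s = f_1 - f_2$ has a solution $s$. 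Because $A$ is a product of cyclic groups of possibly different orders, this is a system of linear equations over a product of rings $\Integer_{n_j}$ rather than over a field, so I would solve it coordinate block by coordinate block, using an integer/modular linear-algebra routine (Smith normal form or Hermite normal form of the integer matrix representing $\partial$, reduced modulo the relevant $n_j$) to decide solvability. The matrix has dimensions polynomial in $\abs{G}$ with entries in $\{-1, 0, 1\}$ that are independent of $A$, so this computation is efficient.

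The main obstacle I expect is the interface between the black box model and explicit linear algebra: namely, establishing and maintaining a consistent, efficiently computable coordinatization of $A$ (and the ability to project $E_i$ onto $G$ to get coset representatives) so that the cohomological membership test becomes an ordinary linear-algebra problem over $\Integer_{n_1} \times \cdots \times \Integer_{n_k}$. Once that bridge is built via the quantum abelian-group decomposition, the remainder is classical and routine, and the stated running time $O(\abs{G}^6 \log^6 \abs{A})$ should follow from bounding the $O(\abs{G}^2)$ factor-set evaluations, the $\poly(\log \abs{A})$ cost of each black box operation and coordinate conversion, and the cost of the integer linear-algebra on a matrix of size $\poly(\abs{G})$.
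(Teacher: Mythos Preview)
Your approach is correct but takes a different route for the membership test in $B^2(G,A)$. The paper never coordinatizes $A$: it treats $C^2(G,A)$ itself as a black box group (elements are length-$\abs{G}^2$ tuples of black box strings for $A$, with operations performed pointwise via the oracle for $A$), pushes a generating set for $C^1(G,A)$ through $\partial$ to obtain generators for $B^2(G,A)$, and then calls the quantum group-size algorithm of \cite{DecompAbGroups} twice---once on those generators and once on the same generators together with $f_1-f_2$---declaring the extensions equivalent iff the two sizes agree. The $\abs{G}^6$ factor arises from the $O(k^3)$ post-processing inside that algorithm with $k=O(\abs{G}^2\log\abs{A})$ generators. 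Your route---decompose $A$ once, convert everything to explicit coordinates in $\prod_j\Integer_{n_j}$, and solve $\partial s=f_1-f_2$ classically via Smith normal form---is equally valid and has the virtue of isolating the single quantum step (decomposing $A$) from purely classical linear algebra; the paper's route, by contrast, never needs to maintain an explicit isomorphism for $A$ and stays entirely within the black box interface. One place you overcomplicate: the projection $\pi_i:E_i\to G$ is part of the input in this model (the paper supplies the isomorphism $E_i/A\cong G$ as an oracle), so you need not reconstruct it via membership testing in $A$; the paper obtains coset representatives simply by drawing nearly uniform random elements of $E_i$ and querying this oracle until every $g\in G$ has appeared, which takes $O(\abs{G}\log\abs{G})$ samples.
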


\begin{theorem}
\TheoremName{quantum-test-large}
There exists a quantum algorithm for testing the equivalence of $E_1$ and $E_2$,
two extensions of $G$ by $A$, where $G$ is \emph{abelian} and all groups are
presented as black box groups running in time $\poly \log \abs{G} \poly \log
\abs{A}$.
\end{theorem}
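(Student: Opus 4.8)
The plan is to exploit the special structure of the cohomology of an abelian group so that the entire equivalence class of each extension is captured by only $O(\log^2 \abs{G})$ elements of $A$, which we can compute and compare in $\poly \log \abs{G}\, \poly \log \abs{A}$ time using the quantum subroutines for abelian groups. Concretely, I would first run the quantum abelian-group structure algorithm (the abelian HSP subroutine of \cite{QuantumAlgSolvGroups}) on $G$ to obtain a cyclic decomposition $G \cong \Integer_{m_1} \times \cdots \times \Integer_{m_k}$ with explicit generators $g_1, \dots, g_k$ of orders $m_1, \dots, m_k$, where $k = O(\log \abs{G})$. I would run the same algorithm on $A$ to obtain a cyclic decomposition, which I will need later to solve discrete-log and constructive-membership problems inside $A$.

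Next, for each extension $E_\ell$ ($\ell = 1,2$) I would lift each $g_i$ to some $\tilde g_i \in E_\ell$ with $\pi_\ell(\tilde g_i) = g_i$; such preimages can again be found with the abelian machinery (express $g_i$ in terms of the images of $E_\ell$'s generators, then lift each generator). From these lifts I would extract the two families of invariants that, as explained below, determine the extension class: the power data $a_i \triangleq \tilde g_i^{m_i} \in A$, computed by repeated squaring in $O(\log m_i)$ black-box operations, and the commutator data $c_{ij} \triangleq [\tilde g_i, \tilde g_j] \in A$ for $i < j$, read directly from the group oracle. Each such element lies in $A$, so I would use constructive membership in the cyclic decomposition of $A$ to record it as a coordinate vector, making the values computed inside $E_1$ and those inside $E_2$ comparable even though they come from different black boxes (recall that equivalence fixes $A$ pointwise, so its generators are shared).

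Correctness rests on the standard description of $H^2(G,A)$ for abelian $G$ coming from the universal coefficient theorem, namely the split sequence whose associated decomposition is
\[
H^2(G, A) \;\cong\; \mathrm{Ext}^1(G, A) \oplus \mathrm{Hom}(\wedge^2 G, A),
\]
in which the second summand is the canonical commutator pairing and the first records the abelian part. Reading this off the chosen lifts, the extension is determined up to equivalence by $(c_{ij})_{i<j}$ together with $(a_i)_i$: rechoosing a lift $\tilde g_i \mapsto u_i \tilde g_i$ with $u_i \in A$ leaves each $c_{ij}$ unchanged (since $A$ is central) and replaces $a_i$ by $u_i^{m_i} a_i$. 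Hence $c_{ij}$ is an honest invariant, while $a_i$ is well defined only modulo $m_i A = \{\, u^{m_i} : u \in A \,\}$. I would therefore declare $E_1$ and $E_2$ equivalent exactly when $c_{ij}^{(1)} = c_{ij}^{(2)}$ for all $i<j$ and $a_i^{(1)} \equiv a_i^{(2)} \pmod{m_i A}$ for all $i$; the latter is a subgroup-membership test for $a_i^{(1)}(a_i^{(2)})^{-1}$ in $m_i A$, which is routine linear algebra once the cyclic decomposition of $A$ is known.

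I expect the main obstacle to be the cohomological bookkeeping rather than the quantum steps: one must verify that these generator-level power and commutator data really do form a complete and well-defined set of equivalence invariants, i.e.\ that they realize the displayed decomposition of $H^2(G,A)$ and that the only relations among admissible data come from the lift indeterminacy just described. The remaining care is algorithmic — ensuring that lifting $g_i$ into $E_\ell$ and all cross-black-box comparisons through $A$ are reducible to abelian HSP and order finding, so that every step runs in $\poly \log \abs{G}\, \poly \log \abs{A}$ time and there are only $O(\log^2 \abs{G})$ invariants to compute and compare.
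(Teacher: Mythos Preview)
Your proposal is correct and is essentially the paper's own approach: decompose $G$ into cyclic factors, lift each standard generator $e_i$ to some $s_i\in E_\ell$, extract the power invariants $\alpha_i=s_i^{d_i}$ and commutator invariants $\beta_{i,j}$, and declare equivalence exactly when the $\beta_{i,j}$'s match and each $\alpha_i^{-1}\alpha_i'$ has a $d_i$-th root in $A$ (checked via the cyclic decomposition of $A$). The only difference is in the justification: where you appeal to the universal-coefficient splitting $H^2(G,A)\cong\mathrm{Ext}^1(G,A)\oplus\mathrm{Hom}(\wedge^2 G,A)$, the paper instead proves the same criterion by direct computation with an explicit ``factored'' transversal $s(x)=s_1^{x_1}\cdots s_m^{x_m}$ (its \Lemma{factored-cocycle} and \Lemma{compatible}), never invoking the universal coefficient theorem.
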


For simplicity, we first prove these three theorems in subsections
\ref{ssec:classical-alg}--\ref{ssec:quantum-alg-large} assuming that
$E_1$ and $E_2$ are \emph{central} extensions. We discuss how to extend these
two algorithms to non-central extensions in subsection~\ref{ssec:non-central}.

In subsection~\ref{ssec:hardness}, we show that the problem solved by the
quantum algorithms are classically hard under the assumption of the
Goldwasser--Micali cryptosystem \cite{GMCrypto} (that quadratic residuosity is
classically hard).

\begin{theorem}
\TheoremName{hardness}
There exists a randomized polynomial time reduction from quadratic residuosity
to testing equivalence of central extensions of $G$ by $A$, where $A$ is given
as a black box group and either $G$ is given as a multiplication table or $G$ is
abelian and given as a black box group.  Hence, under the assumption that there
is no efficient (classical) Monte Carlo algorithm for testing quadratic
residuosity, there is no efficient Monte Carlo algorithm for testing equivalence
of extensions of $G$ by $A$ in this model.
\end{theorem}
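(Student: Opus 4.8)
The plan is to give a randomized polynomial-time reduction from quadratic residuosity to the extension equivalence problem, using the correspondence of \Lemma{Correspondence} to turn a residuosity question into a question of membership in $B^2(G,A)$. The key idea is to choose $G$ and $A$ so that the coboundary subgroup captures exactly the quadratic residues. I would take $G = \Integer_2$ and $A = \Integer_N^*$, the group of units modulo $N = pq$ (written multiplicatively), presented as a black box group. For $G = \Integer_2$, a normalized $2$-cocycle $f$ is determined by the single value $f(1,1) \in A$, since the cocycle condition is automatically satisfied, so $Z^2(\Integer_2, A) \cong A$. A $1$-cochain $s$ is likewise determined by $s(1) = b \in A$, and $\partial s(1,1) = s(1)s(1)s(0)^{-1} = b^2$. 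Hence $B^2(\Integer_2, A)$ corresponds to the subgroup of squares $(\Integer_N^*)^2$, which is precisely the set of quadratic residues $QR_N$, and $H^2(\Integer_2, A) \cong A / (\Integer_N^*)^2$.

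Given an instance $(N, a)$ of quadratic residuosity, I would first compute the Jacobi symbol $\left( \frac{a}{N} \right)$, declaring $a$ a non-residue if it equals $-1$; the remaining (hard) case has $a$ of Jacobi symbol $1$. In that case I construct two central extensions of $\Integer_2$ by $\Integer_N^*$: the direct product $E_1$, corresponding to the trivial factor set, and $E_2$, corresponding to the factor set $f$ with $f(1,1) = a$. By the computation above, the difference of their factor sets has value $a$ at $(1,1)$, so $E_1$ and $E_2$ are equivalent iff $a \in (\Integer_N^*)^2$, i.e., iff $a$ is a quadratic residue. Running the purported equivalence tester on $E_1, E_2$ and reporting its answer therefore decides residuosity.

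It remains to realize $E_1$ and $E_2$ as black box groups efficiently. I would encode an element as a pair consisting of a residue modulo $N$ and a bit for the $\Integer_2$ component, implementing the extension multiplication $(b,x)(c,y) = (bc\,f(x,y), xy)$ using modular multiplication in $\Integer_N^*$, the extended Euclidean algorithm for inverses, and comparison to $1$ for the identity test; these are all polynomial in $\log N$. One checks directly that the copy $\set{(b,0)}{b \in A}$ is central, so each $E_i$ is genuinely a central extension. For generators I would take $(e,1)$ together with the images $(g_i, 0)$ of a set of random elements $g_i \in \Integer_N^*$, where $O(\log N)$ such elements generate $\Integer_N^*$ with high probability. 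Because $\Integer_2$ is both tiny and abelian, the same construction serves whether $G$ is supplied by a multiplication table or as an abelian black box group, covering both cases in the statement.

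The main obstacle is not the cohomology computation, which is short, but ensuring the reduction is a legitimate black box construction: the generators $(g_i,0)$ and $(e,1)$ must actually generate the whole extension, which relies on the random $g_i$ generating $\Integer_N^*$, and the only difference visible to the equivalence tester between $E_1$ and $E_2$ must be hidden inside the group operations (the factor set value $a$) rather than the presentation. Because the equivalence tester is itself Monte Carlo and the generation step succeeds only with high probability, the composed reduction is Monte Carlo, matching the statement; the contrapositive then yields classical hardness of equivalence under the quadratic residuosity assumption.
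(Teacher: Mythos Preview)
Your proposal is correct and follows essentially the same approach as the paper: take $G=\Integer_2$, $A=\Integer_N^*$, observe that normalized $2$-cocycles are determined by $f(1,1)$ while coboundaries are exactly the squares, and compare the trivial extension to the one with $f(1,1)=a$. The paper's write-up differs only in cosmetic details (it assumes the Jacobi-symbol-$+1$ promise rather than preprocessing it, and it produces generators for $E$ by sampling random pairs rather than adjoining $(1,1)$ to lifts of random $A$-generators), but the mathematical content is the same.
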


Finally, in subsection~\ref{ssec:count}, we use the machinery developed for
these algorithms to show that we can also efficiently count the number of
inequivalent extensions in the two models. Specifically, we have the following:

\begin{theorem}
\TheoremName{classical-count}
There exists an efficient (classical) Monte Carlo algorithm for counting the
number of equivalence classes of extensions of $G$ by $A$ when both groups are
given by multiplication tables.
\end{theorem}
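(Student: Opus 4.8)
The plan is to reduce the counting problem to computing a single integer, $\abs{H^2(G,A)}$, and then to evaluate that integer by integer linear algebra carried out over the module $A$. By \Lemma{Correspondence}, the number of equivalence classes of central extensions of $G$ by $A$ is exactly $\abs{H^2(G,A)}$, so it suffices to compute this quantity. Since $H^2(G,A) = Z^2(G,A)/B^2(G,A)$ and $\partial : C^1(G,A) \to C^2(G,A)$ is a homomorphism with $\Image \partial = B^2(G,A)$ and kernel $Z^1(G,A) = \mathrm{Hom}(G,A)$, the first isomorphism theorem gives $\abs{B^2(G,A)} = \abs{C^1(G,A)}/\abs{Z^1(G,A)}$. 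A normalized $1$-cochain is a free choice in $A$ at each of the $\abs{G}-1$ non-identity elements of $G$, so $\abs{C^1(G,A)} = \abs{A}^{\abs{G}-1}$. Combining these,
\[ \abs{H^2(G,A)} = \frac{\abs{Z^2(G,A)}}{\abs{B^2(G,A)}} = \frac{\abs{Z^2(G,A)} \cdot \abs{Z^1(G,A)}}{\abs{A}^{\abs{G}-1}}, \]
so the whole task reduces to computing the two orders $\abs{Z^1(G,A)}$ and $\abs{Z^2(G,A)}$.

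Both $Z^1$ and $Z^2$ are solution sets of homogeneous linear systems whose coefficients are integers (indeed only $0$ and $\pm 1$) and whose unknowns range over the finite abelian group $A$. First I would compute the cyclic decomposition $A \cong \Integer_{n_1} \times \cdots \times \Integer_{n_k}$ directly from the multiplication table; this is a standard polynomial-time computation and yields $k = O(\log \abs{A})$ factors. Because the defining equations act on $A$ componentwise, each order factors over the cyclic pieces, for example $\abs{Z^2(G,A)} = \prod_{j=1}^{k} \abs{Z^2(G,\Integer_{n_j})}$, reducing everything to counting solutions of a fixed integer system modulo each $n_j$.

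For $Z^1(G,A) = \mathrm{Hom}(G,A)$, every homomorphism factors through the abelianization, so after extracting the decomposition $G^{\mathrm{ab}} \cong \Integer_{m_1} \times \cdots \times \Integer_{m_\ell}$ from the table I would use the elementary count $\abs{\mathrm{Hom}(G,A)} = \prod_{i,j} \gcd(m_i, n_j)$. For $Z^2(G,A)$, I would assemble the cocycle condition $f(y,z) - f(xy,z) + f(x,yz) - f(x,y) = 0$ as an integer matrix $M$ of size $O(\abs{G}^3) \times O(\abs{G}^2)$ acting on the $N = (\abs{G}-1)^2$ normalized coordinates, compute its Smith normal form $\mathrm{diag}(d_1,\dots,d_r,0,\dots)$ once over $\Integer$, and then count solutions modulo each $n_j$ by the standard formula $\abs{Z^2(G,\Integer_{n})} = n^{\,N-r}\prod_{i=1}^{r}\gcd(d_i,n)$. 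Substituting into the displayed identity yields $\abs{H^2(G,A)}$ exactly, and the general (non-central) statement then follows by combining this count across the conjugation actions of $G$ on $A$ via the reduction to the central case from the preceding subsection.

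The main obstacle is that the linear algebra lives over the module $A$ rather than a field, so Gaussian elimination is unavailable and I must instead rely on the Smith normal form over $\Integer$ to count the solutions of $Mx \equiv 0$ simultaneously modulo every factor order. The technical work is verifying that this remains efficient: that the entries generated while reducing the $O(\abs{G}^3) \times O(\abs{G}^2)$ matrix stay of polynomial bit-length (controlled by using a modular Smith-normal-form algorithm), and that the per-factor $\gcd$ counts cost only $\poly(\log \abs{A})$. The decomposition routines for $A$ and $G^{\mathrm{ab}}$ and the factor-set computations are inherited from the classical equivalence test, so the overall running time stays $\poly(\abs{G},\abs{A})$, i.e.\ efficient in the multiplication-table model, and any randomization is confined to those inherited subroutines, which is why the algorithm is stated as Monte Carlo.
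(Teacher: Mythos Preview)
Your argument for the central case is correct but takes a genuinely different route from the paper. The paper computes $\abs{H^2(G,A)}=\abs{Z^2(G,A)}/\abs{B^2(G,A)}$ by evaluating $\abs{B^2(G,A)}$ and $\abs{B^3(G,A)}$ directly as orders of permutation groups (using $\abs{Z^2}=\abs{C^2}/\abs{B^3}$ via the surjection $\partial^2:C^2\to B^3$), reusing the permutation representation of $C^2(G,A)$ built for the equivalence test together with the standard order algorithm for permutation groups. You instead decompose $A$ into cyclic factors, write the cocycle identity as a single integer matrix, take one Smith normal form, and count kernels modulo each cyclic order; $\abs{B^2}$ you get from $\abs{C^1}/\abs{Z^1}$ and the explicit $\gcd$ formula for $\abs{\mathrm{Hom}(G,A)}$. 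Your route is more elementary and, in fact, deterministic rather than Monte Carlo: the cyclic decomposition of $A$ from its multiplication table and the Smith normal form need no randomness, so your closing remark about ``inherited'' randomization is misplaced---you never actually call the randomized subroutines of the equivalence test. The paper's route, by contrast, reuses machinery already built and carries over unchanged to the non-central setting.

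Your extension to the non-central case does not go through as written. With a nontrivial action $\varphi:G\to\Aut A$, the twisted coboundary and cocycle identities involve $\varphi(x)$ applied to values in $A$; the defining system is then $\Integer[G]$-linear rather than $\Integer$-linear, the coefficient matrix no longer has entries in $\{0,\pm 1\}$, and your splitting $Z^2(G,A)\cong\prod_j Z^2(G,\Integer_{n_j})$ fails unless the cyclic decomposition of $A$ happens to be $G$-stable. Likewise $Z^1(G,A)$ becomes the group of derivations, not $\mathrm{Hom}(G,A)$, so the $\prod_{i,j}\gcd(m_i,n_j)$ count is no longer valid. The preceding subsection does not ``reduce to the central case''; it merely replaces $\partial$ by its twisted version, which the paper's permutation-group computation absorbs without modification. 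To make your approach cover a fixed nontrivial $\varphi$, you would need to redo the linear algebra over the $G$-module $A$ itself (or over $\Integer[G]$), not over~$\Integer$.
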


\begin{theorem}
\TheoremName{quantum-count}
There exists an efficient quantum algorithm for counting the number of
equivalence classes of extensions of $G$ by $A$ when $A$ is given as a black box
group and $G$ is given by a multiplication table.
\end{theorem}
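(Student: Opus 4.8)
The plan is to reduce counting to a single cohomology computation and then split that computation over the cyclic factors of $A$. By \Lemma{Correspondence}, the number of inequivalent central extensions of $G$ by $A$ is exactly $\abs{H^2(G,A)} = \abs{Z^2(G,A)}/\abs{B^2(G,A)}$, so it suffices to compute this index. The key observation is that $B^2(G,A)$ and $Z^2(G,A)$ are, respectively, the image of $\partial \colon C^1(G,A) \to C^2(G,A)$ and the kernel of the operator $d \colon C^2(G,A) \to C^3(G,A)$ whose vanishing is the cocycle condition, and that the entries of both maps --- the combination $s(x)+s(y)-s(xy)$ for $\partial$ and the alternating sum $f(y,z)-f(xy,z)+f(x,yz)-f(x,y)$ for $d$ --- are fixed integers ($0, \pm 1$) read directly off the multiplication table of $G$. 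Writing $C^1(G,A) \cong A^{\abs{G}-1}$ and $C^2(G,A) \cong A^{(\abs{G}-1)^2}$, the maps $\partial$ and $d$ are thus given by integer matrices $M_\partial$ and $M_d$ that depend only on $G$ and act coordinate-wise on the $A$-valued entries; in particular, they never reference a specific element of $A$.

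This coordinate-wise action lets me split the computation over a cyclic decomposition of $A$. Writing $A \cong \Integer_{n_1} \times \cdots \times \Integer_{n_k}$, the space $A^N$ splits as $\prod_i \Integer_{n_i}^N$ and $M_\partial, M_d$ act as the corresponding reductions mod $n_i$ on each factor, so $\abs{Z^2(G,A)} = \prod_i \abs{\ker(M_d \bmod n_i)}$ and $\abs{B^2(G,A)} = \prod_i \abs{\Image(M_\partial \bmod n_i)}$. Equivalently, this is just the additivity of group cohomology (with the trivial action of a central extension) in the coefficient module, giving $\abs{H^2(G,A)} = \prod_{i=1}^k \abs{H^2(G,\Integer_{n_i})}$, where each factor depends only on $G$ and the single integer $n_i$. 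Consequently, the only quantum ingredient needed is to recover the orders $n_1,\dots,n_k$ of some cyclic decomposition of the black box group $A$ --- not even an explicit isomorphism, just the structure type. This is precisely the decomposition of an abelian black box group into cyclic factors, computable in quantum time $\poly(\log\abs{A})$ via order finding and the abelian HSP \cite{QuantumAlgSolvGroups}.

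Having obtained the $n_i$, the remaining work is classical. First I would build $M_\partial$ and $M_d$ from the multiplication table of $G$ in time $\poly(\abs{G})$. Then, for each $i$, I would compute $\abs{\ker(M_d \bmod n_i)}$ and $\abs{\Image(M_\partial \bmod n_i)}$; the cleanest way to do this uniformly across all the $n_i$ is to compute the Smith normal forms of the integer matrices $M_\partial$ and $M_d$ once, after which the kernel and image orders over any $\Integer_n$ are read off from the invariant factors together with their $\gcd$'s with $n$. Multiplying the per-factor sizes $\abs{H^2(G,\Integer_{n_i})}$ then yields $\abs{H^2(G,A)}$, the desired count. Since $k = O(\log\abs{A})$ and the matrices have dimension $\poly(\abs{G})$, the whole procedure runs in $\poly(\abs{G}, \log\abs{A})$ time.

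The step I expect to be the main obstacle is the linear algebra over the non-field rings $\Integer_{n_i}$ when the $n_i$ are composite: Gaussian elimination does not directly give kernel and image sizes modulo a composite, so I would need to justify the Smith-normal-form formulas for $\abs{\ker(M \bmod n)}$ and $\abs{\Image(M \bmod n)}$ in terms of the invariant factors of $M$ over $\Integer$, and verify carefully that the coordinate-wise action of $M_\partial$ and $M_d$ on $A \cong \prod_i \Integer_{n_i}$ really does decompose factor-by-factor (which is exactly where additivity of cohomology is doing the work). Everything else --- the reduction to $\abs{H^2(G,A)}$ and the quantum decomposition of $A$ --- is either immediate from \Lemma{Correspondence} or a direct invocation of known abelian black box group algorithms.
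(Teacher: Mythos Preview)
Your proposal is correct but takes a genuinely different route from the paper. The paper computes $\abs{H^2(G,A)}$ by working directly in the black box world: it builds $B^2(G,A)$ and $B^3(G,A)$ as black box groups (each element being a tuple of $\abs{G}^2$ or $\abs{G}^3$ elements of $A$), applies the quantum group-size algorithm of \cite{DecompAbGroups} to each, and then uses $\abs{Z^2(G,A)} = \abs{C^2(G,A)}/\abs{B^3(G,A)}$ and $\abs{H^2(G,A)} = \abs{Z^2(G,A)}/\abs{B^2(G,A)}$. In other words, the paper never decomposes $A$ explicitly and never writes down the integer matrices $M_\partial$, $M_d$; it simply reuses the same ``compute the order of an abelian black box group'' subroutine that already appeared in the equivalence test. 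Your approach instead front-loads a single quantum call to decompose $A$ into cyclic factors, invokes additivity of cohomology in the coefficient module to split $H^2(G,A)$ over those factors, and then finishes with purely classical Smith-normal-form linear algebra on integer matrices determined by $G$ alone. What this buys you is a sharper separation of concerns (quantum only touches $A$, not the much larger groups $B^2$, $B^3$) and a potentially better running time, since the paper's black box calls are on groups whose generating sets have size $O(\abs{G}^2 \log\abs{A})$ or $O(\abs{G}^3 \log\abs{A})$. What the paper's approach buys is uniformity with the equivalence-testing machinery and the avoidance of the SNF/$\gcd$ bookkeeping you flag as your main obstacle; that bookkeeping is genuine but standard, and your sketch of reading off $\abs{\ker(M \bmod n)}$ and $\abs{\Image(M \bmod n)}$ from the invariant factors via $\gcd$'s with $n$ is the right way to handle it.
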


\subsection{Classical Algorithm} \label{ssec:classical-alg}

For the classical algorithm, we take the inputs $A$, $G$, and $E_1$ and $E_2$ as
multiplication tables. This is the usual setup for the group isomorphism
problem, and it is natural to consider extension equivalence in the same manner.
However, we must also require that the isomorphism $E_i / A \cong G$ be provided
explicitly so that we are not required to solve a group isomorphism problem in
order to understand the relationship between $E_i$ and $G$. This will be
specified as a table of pairs $(x, g)$, where each $x \in E_i$ appears exactly
once along with the $g \in G$ such that $x + A \xrightarrow{\sim} g$.

\begin{proof}[Proof of \Theorem{classical-test}]
As described above, we will reduce to membership testing in $B^2(G, A)$. Since
the group $B^2(G, A)$ has size $\sim \abs{A}^{\abs{G}}$, we cannot reduce to a
membership test using a multiplication table because the time to write such a
table is exponentially large in the input size. We also cannot reduce to a
membership test using a black box model simply because there is no efficient
classical algorithm known for membership testing in this model.  Fortunately, we
will see that we can reduce to a membership test using the third approach, a
permutation representation. We can then perform the membership testing
efficiently using the algorithm from \cite{FastPermGroupAlg}.

First, note that we can represent $A$ using the regular representation, that is,
each $a \in A$ is represented as a permutation $\sigma(a)$ of the set $A$
itself. The degree of this representation is $\abs{A}$, which is small. And it
is easy to see that this representation of $A$ is faithful. (This is Cayley's
theorem.)

Define $C^2(G, A)$ to be all maps $G \times G \rightarrow A$. These are simply
vectors of $\abs{G}^2$ elements of $A$. (Since $B^2(G, A) \le Z^2(G, A) \le
C^2(G, A)$, we can think of elements of $B^2(G, A)$ and $Z^2(G, A)$ in the same
way.) Put another way, $C^2(G, A)$ is a direct sum of $\abs{G}^2$ copies of $A$.
Hence, we can represent $f \in C^2(G, A)$ as the direct sum (as vector spaces)
of $\sigma(f(g, h))$ for each $g, h \in G$. It is again clear that this
representation is faithful: $\sigma(f)$ is the identity iff $\sigma(f(g, h))$ is
identity for each $g, h \in G$ iff $f(g, h) = e$ for each $g, h \in G$ (since
our representation of $A$ is faithful) iff $f$ is the identity in $C^2(G, A)$
(by definition).

In other words, our representation space is the set $\set{a_{g, h}}{a \in A,\ g,
h \in G}$ --- elements of $A$ labelled by pairs $(g, h) \in G \times G$. We can
see that the degree of this representation is $n \triangleq \abs{A} \abs{G}^2$.

It is possible that $A$ may have a permutation representation with smaller
degree in special cases, but in the worst case, it must be $\abs{A}$. In
particular, any simple cyclic group requires this degree. It is also easy to see
that any faithful representation of $C^2(G, A)$ must contain all $\abs{G}^2$
copies of this representation. Hence, our degree of $\abs{A} \abs{G}^2$ cannot
in general be improved.

In order to invoke a membership test for $B^2(G, A)$, we also need to provide a
generating set. The easiest way to do this is to take a generating set for
$C^1(G, A)$ and then push it forward to $B^2(G, A)$ by applying $\partial$. Any
$f \in B^2(G, A)$ satisfies $f = \partial s$ for some $s \in C^1(G, A)$. So if
$s_1, \dots, s_k$ is a generating set for $C^1(G, A)$, then we have $s =
s_1^{j_1} \dots s_k^{j_k}$ for some $\{j_i\} \subset \Integer_+$. And since
$\partial$ is a homomorphism, we have $f = \partial(s_1^{j_1} \dots s_k^{j_k}) =
\partial(s_1)^{j_1} \dots \partial(s_k)^{j_k}$. Thus, $\partial s_1, \dots,
\partial s_k$ is a generating set for $B^2(G, A)$.\footnote{Since $\partial$ is
not an isomorphism, this generating set may be redundant. However, since its
kernel is very small compared to $\abs{C^1(G, A)}$, this increases the size of
the generating set by a $1 - o(1)$ factor.}

It is easy to find a minimal generating set for $C^1(G, A)$. Since this group is
simply a direct sum of $\abs{G}$ copies of $A$, a minimal generating set for
$C^1(G, A)$ is given by $\abs{G}$ copies of a minimal generating set for $A$.
We can find a generating set for $A$ with high probability simply by choosing
$O(\log \abs{A})$ random elements \cite{PermGroupAlg}. And it is easy to see
that we can choose random elements from $A$ since we have an explicit list of
its elements. Hence, we can construct a generating set for $C^1(G, A)$
of size $O(\abs{G} \log \abs{A})$.

Finally, note that, since we have a simple formula for $\partial$, taking
constant time to evaluate for each $(g, h) \in G \times G$, we can construct the
generating set for $B^2(G, A)$ in $O(\abs{G}^2)$ time for each element in the
set. Since this set contains $O(\abs{G} \log \abs{A})$ elements, we can
construct the generating set in $O(\abs{G}^3 \log \abs{A})$ time.

The other input to the membership test is the element $f_1 - f_2 \in Z^2(G, A)$.
We can compute this easily in linear time once we construct a factor set $f_i$
for each extension. To do this, we simply need to choose (arbitrarily) a
representative $s_i(g) \in E_i$ for each $g \in G$, which we can do in one pass
over the table providing the isomorphism $E_i / A \cong G$. (Also note that we
must choose $e \in E$ to represent $e \in G$.) This takes $O(\abs{E}) =
O(\abs{A}\abs{G})$ time. Next, we compute $f_i$ for each $g, h \in G$ by $f_i(g,
y) = s(g) + s(h) - s(gh)$. Finally, we subtract them pointwise to compute $f_1 -
f_2$. All of the above be done in $O(\abs{A}\abs{G} + \abs{G}^2)$ time.

It remains to invoke a membership test for a permutation group. The fastest
algorithms \cite{PermGroupAlg} apply to so-called ``small-base groups'', but
unfortunately, this representation is not one.\footnote{The group $B^2(G, A)$
would be small-base if $\log \abs{B^2(G, A)} = O(\poly \log n) = O(\poly(\log
\abs{G} + \log \abs{A}))$, but we can see that $B^2(G, A)$ is much bigger than
this.} For the general case, the fastest known algorithm is from
\cite{FastPermGroupAlg} and runs in time $\tilde{O}(n^3)$.

All of the membership test algorithms for permutation groups work by first
computing what is called a strong generating set. As noted in
\cite{FastPermGroupAlg}, Gaussian elimination is a special case of this
construction, so the running time of $\tilde{O}(n^3)$ is in fact optimal for all
algorithms that work in this manner.

We note that the time to run this membership test dominates the time required to
prepare its inputs, so the overall running time will be $\tilde{O}(n^3) =
\tilde{O}(\abs{A}^3 \abs{G}^6)$.
\end{proof}

\subsection{Quantum Algorithm for Small $G$} \label{ssec:quantum-alg}

For classical algorithms, we excluded the possibility of using a membership test
for black box groups because no efficient algorithm is known to exist.  However,
in the quantum case, we have such an algorithm \cite{DecompAbGroups}. As a
result, it is natural to consider whether extension equivalence can also be
solved in the black box model.

Our quantum algorithm will take the inputs $A$ and $E$ as black box groups. That
is, we are given a generating set for each and an oracle for performing the
three operations listed earlier in the group $E$.\footnote{This also works for
$A$ since $A \le E$.}

For the group $G$, on the other hand, we first consider the case when $G$ is
given by a multiplication table. In this case, we can efficiently work with the
group $B^2(G, A)$ since it has a generating set of size $O(\abs{G}^2 \log
\abs{A})$ and we only need a running time polynomial in $\abs{G}$ in this model.
Practically speaking, this means that we will be able to compute equivalence of
extensions of a small group $G$ by a large group $A$ using this algorithm. Such
extensions can still be quite complicated groups.

Finally, the isomorphism $E_i / A \cong G$ will be provided as an oracle since
we cannot reasonably take a table with $\abs{E}$ rows as input. Given an element
$x \in E_i$, the oracle return the $g \in G$ corresponds to $x + A \in E_i / A$.

\begin{proof}[Proof of \Theorem{quantum-test}]
As in the classical algorithm, we will apply the correspondence in
\Lemma{Correspondence} and reduce to a membership test in $B^2(G, A)$.

In order to use a membership test for $B^2(G, A)$, we must show how to construct
an oracle for this group or a larger group containing it. We will work with
$C^2(G, A)$. Since each element of $C^2(G, A)$ is a vector (or direct sum) of
$\abs{G}^2$ elements of $A$, we can identify elements of this group by strings
containing $\abs{G}^2$ strings for elements of $A$. We can perform
multiplication and inverses pointwise, each using $\abs{G}^2$ calls to the
oracle for $A$. Similarly, the identity in $C^2(G, A)$ is simply $\abs{G}^2$
copies of the identity in $A$, so we can also check for the identity with
$\abs{G}^2$ calls to the oracle for $A$.

One input to the membership test is a generating set for $B^2(G, A)$. We saw in
the previous section that this can be constructed simply by making $\abs{G}^2$
labelled copies of a generating set for $A$. In this case, we are given a
generating set for $A$ as input, and we can turn this into $\abs{G}^2$ labelled
copies in $O(\abs{G}^2 \log \abs{A})$ time.\footnote{This is assuming that we
are given a generating set for $A$ of size $O(\log \abs{A})$. We can easily
reduce to a generating set of this size, if this is not what we are given, by
using random subproducts as described in \cite{PermGroupAlg}.}

The other input to the membership test is the element $f_1 - f_2$. As before, in
order to compute these factor sets, we need to be able to choose a
representative of each coset of $A$ in $E$. However, note that our classical
algorithm ran in $O(\abs{E})$ time, which is no longer efficient in this model.
So we will need a slightly different approach.

Instead of enumerating $E$, we will select random elements from $E$ and invoke
the oracle we are given to find the projection in $G$. If $x \in E$ projects
onto $g \in G$, then this gives us our representative $s(g) = x$ for $g$. We
continue to select random elements until we have a representative for each $g
\in G$ (aside from $e \in G$, which we set to $s(e) = e$).

Now, since we are only given a generating set for $E$, it is not possible to
select uniformly random elements. However, we can compute nearly uniformly
random elements as described in $\cite{RandomGroupElem}$ in time linear in the
size of the generating set for $E$ (plus an $\tilde{O}(\log^5 \abs{A})$ additive
term). The generated elements are nearly uniform in the sense that the
probability of generating $x \in E$ is off by a $1 - o(1)$ factor, which we can
choose to be arbitrarily small.

With this, the probability of producing any particular $g \in G$ will be $(1 \pm
\epsilon)/\abs{G}$. Hence, by standard calculations, we will produce a
representative for each $g \in G$ with high probability after $O(\abs{G} \log
\abs{G})$ random choices. The overall time to compute these representatives if
$\tilde{O}(\abs{G} \log \abs{A} + \log^5 \abs{A})$.

With choices of representatives $s_i$ for each $E_i$, we can compute the factor
sets $f_i$ and their difference $f_1 - f_2$ in the same manner as in the
classical algorithm. This takes time $O(\abs{G}^2)$.

To perform the membership test, we apply the algorithm from
\cite{DecompAbGroups}, which can be used to compute the size of a subgroup.
We call this once with the generating set for $B^2(G, A)$ and once with this
generating set plus $f_1 - f_2$. If the latter subgroup is larger, then $f_1 -
f_2 \notin B^2(G, A)$, and the extensions are not equivalent. Otherwise, they
are equivalent.

As described in \cite{QuantumCompAlg}, the running time of the algorithm for
computing group size depends on the size of the generating set, $k$, and the
maximum order of any element in the group, $q$. As mentioned above, we have $k =
O(\abs{G}^2 \log \abs{A})$ for the first. For the second, the best bound we have
in general is $q = \abs{A}$.

The algorithm first performs $O(k \log q)$ group operations. Each of these
translates into $\abs{G}^2$ calls to the oracle for $A$. Thus, all together, it
will perform $O(\abs{G}^4 \log^2 \abs{A})$ calls to the oracle for $A$.
The algorithm also performs $O(k^3 \log^2 q) = O(\abs{G}^6 \log^5 \abs{A})$
other elementary operations as part of its post-processing, which dominates the
running time.

There are a few other details about the running time of this algorithm that need
to be considered. However, to keep this presentation simpler, we discuss those
in the appendix, in section~\ref{sec:group-size}. Here, it suffices here to say
that the other necessary processing adds at most a $\log \abs{A}$ factor to the
running time, giving us a running time of $O(\abs{G}^6 \log^6 \abs{A})$.
\end{proof}

As in the classical case, it turns out that the quantum algorithm needs to
perform something like Gaussian elimination on a matrix.\footnote{Specifically,
computing the Smith normal form of a matrix. See \cite{DecompAbGroups} for
details.} This occurs within the post-processing steps of the algorithm for
computing the size of the subgroup.  The matrix in question has rows and columns
indexed by generators, and since we have $O(\abs{G}^2 \log \abs{A})$ generators,
we get an $O(\abs{G}^6)$ factor in the running time of the algorithm.

The dependence on $\abs{A}$, on the other hand, is exponentially improved
compared to the classical algorithm. Hence, if the group $G$ is fairly small
(i.e., $\abs{G} = O(\log \abs{A})$) then the quantum algorithm is exponentially
faster overall. As we will see in the next section, extensions of small groups
(even constant sized) are complicated and interesting objects.

\subsection{Quantum Algorithm for Large, Abelian $G$}
\label{ssec:quantum-alg-large}

As mentioned in previous subsection,
when $G$ is a black box group, we have little hope of working with the group
$B^2(G, A)$ since we cannot efficiently write down a generating set. Worse, we
cannot even write down an $f \in Z^2(G, A)$ corresponding to our extension
because this requires $\abs{G}$ numbers in the general case. Hence, it is clear
that we will need to put some restrictions on the form of $f$ if we are to work
with it efficiently. Below, we will see that this can be done without loss of
generality in the case where $G$ is abelian.

By the structure theorem for abelian groups, we know that $G \cong
\Integer_{d_1} \times \dots \times \Integer_{d_m}$ for some integers $d_1 \mid
d_2 \mid \dots \mid d_m$, which means $m = O(\log \abs{G})$. We can use the
algorithm of \cite{DecompAbGroups} to efficiently decompose $G$ into a product
of this form on a quantum computer, so we can assume that we have $G$ in this
form.

As usual, we will have $f = \partial s$ for some $s : G \rightarrow E$. In particular, for $\{x_i \in \Integer_{d_i}\}_{i \in [m]}$, we will choose $s(x_1, \dots, x_m) = s_1^{x_1} \dots s_m^{x_m}$ for some $\{s_i \in E\}$ such that $s_i$ is a representative of $e_i \triangleq (0, \dots, 0, 1, 0, \dots, 0) \in G$ (where the $1$ is in the $i$-th place).  We can check that this $s$ is a valid set of representatives for $G$. Since $\pi : E \rightarrow G$ is a homomorphism, we can see that $\pi(s(x_1, \dots, x_m)) = (\pi s_1)^{x_1} \dots (\pi s_m)^{x_m} = e_1^{x_1} \dots e_m^{x_m} = (x_1, 0, \dots, 0) \dots (0, \dots, 0, x_m) = (x_1, \dots, x_m)$.

Most importantly, it is clear that we can write down the numbers $s_1, \dots, s_m$ efficiently in terms of our generators for $A$, so this gives us an efficient way to represent $s$ and $f = \partial s$.

\newcommand{\factored}{\mathcal{F}}

Let us define $\factored(G, E)$ to be the set of functions $G \rightarrow E$ of
the above form, i.e, $s \in \factored(G, E)$ iff $s(x_1, \dots, x_m) = s_1^{x_1}
\cdots s_m^{x_m}$ for some $s_1, \dots, s_m \in E$. Note that we have $s(0,
\dots, 0) = 0$, so these functions are normalized. Since $s(x_1, \dots, x_m)$ is
always a representative of $(x_1, \dots, x_m) \in G$, as we saw in the proof of
\Lemma{Correspondence}, we then always have $\partial s \in Z^2(G, A)$, that is,
$\partial \factored(G, E) \subset Z^2(G, A)$.  Likewise, if we consider the
functions $\factored(G, A)$ (with codomain $A$ rather than $E$), we see that
these are a subset of $C^1(G, A)$ --- every $s \in \factored(G, A)$ is a
1-cochain, but not every 1-cochain is in this concise form (defined in terms of
some $s_1, \dots, s_m$) --- so we define $B^2_{\factored}(G, A) \triangleq
\partial \mathcal{F}(G, A) \subset \partial C^1(G, A) = B^2(G, A)$.  (It may be
helpful to refer back to \Figure{group-cohomology} for the definitions of $C^2$,
$B^2$, $Z^2$, etc.)

The following lemma shows that it will be sufficient to work with
$B^2_\factored(G, A)$.

\begin{lemma}
\LemmaName{factored}
Suppose that $f \in \partial \factored(G, E_1)$ and $g \in \partial \factored(G, E_2)$, then $f - g \in B^2(G, A)$ iff $f - g \in B^2_\factored(G, A)$.
\end{lemma}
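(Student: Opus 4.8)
The $\Leftarrow$ direction is immediate: since $\factored(G,A) \subset C^1(G,A)$, we have $B^2_\factored(G,A) = \partial \factored(G,A) \subset \partial C^1(G,A) = B^2(G,A)$. All the work is in the $\Rightarrow$ direction, and my plan is to compute $f$ and $g$ in an explicit normal form and then read off the conclusion from two standard facts: that coboundaries over an abelian group are symmetric, and that $H^2(\Integer_d, A) \cong A/dA$ is represented by the ``carry'' cocycle.

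First I would write the factored representatives as $s(x) = a_1^{x_1}\cdots a_m^{x_m}$ with $a_i \in E_1$ and $s'(x) = (a_1')^{x_1}\cdots (a_m')^{x_m}$ with $a_i' \in E_2$, so that $f = \partial s$ and $g = \partial s'$. Let $\chi_i(x,y) \in \{0,1\}$ be the indicator of a carry in the $i$-th coordinate, i.e. $\chi_i(x,y) = 1$ exactly when $x_i + y_i \ge d_i$ (taking $x_i,y_i$ as canonical representatives). Set $c_{ij} \triangleq [a_i,a_j]$ and $r_i \triangleq a_i^{d_i}$; these lie in $A$ because $\pi_1([a_i,a_j]) = [e_i,e_j] = 0$ and $\pi_1(a_i^{d_i}) = d_i e_i = 0$, and they are central since $A \le Z(E_1)$. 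The key computational step is to push $s(x)s(y)$ into the normal form $a_1^{z_1}\cdots a_m^{z_m}$ with $z_i = (x_i+y_i) \bmod d_i$, collecting central correction terms: reordering the generators contributes commutator factors, and reducing each exponent modulo $d_i$ contributes a factor $r_i^{\chi_i(x,y)}$. This yields, in additive notation for $A$, $f(x,y) = \sum_{i<j} x_j y_i\, c_{ij} + \sum_i \chi_i(x,y)\, r_i$, and likewise $g(x,y) = \sum_{i<j} x_j y_i\, c_{ij}' + \sum_i \chi_i(x,y)\, r_i'$. I expect this normal-form computation --- carefully tracking the non-commuting reorderings via centrality of the $c_{ij}$ and the modular reductions via centrality of the $r_i$ --- to be the main obstacle; everything afterward is formal.

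Given the normal form, suppose $f - g \in B^2(G,A)$, say $f - g = \partial t$. Since $G$ is abelian, $\partial t(x,y) = t(x)+t(y)-t(x+y)$ is symmetric in $x,y$, so $f-g$ is symmetric. Hence $(f-g)(x,y) - (f-g)(y,x) = 0$ for all $x,y$; from the normal form this difference equals $\sum_{i<j}(x_j y_i - x_i y_j)(c_{ij}-c_{ij}')$ (the $\chi_i$ terms are symmetric and cancel), and evaluating at $x = e_j,\, y = e_i$ forces $c_{ij} = c_{ij}'$ for every $i<j$. (Equivalently, the difference $(f-g)(x,y)-(f-g)(y,x)$ is the commutator pairing $(x,y)\mapsto[s(x),s(y)][s'(x),s'(y)]^{-1}$, and symmetry forces the two pairings to agree on generators.) With all $c_{ij}=c_{ij}'$ the bilinear terms cancel completely, leaving $f - g = \sum_i \chi_i(x,y)\, w_i$ where $w_i \triangleq r_i - r_i' \in A$.

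It remains to show $w_i \in d_i A$, for then choosing $u_i \in A$ with $d_i u_i = w_i$ and setting $u(x) = u_1^{x_1}\cdots u_m^{x_m} \in \factored(G,A)$ gives $\partial u(x,y) = \sum_i \chi_i(x,y)(d_i u_i) = f-g$, so $f-g \in B^2_\factored(G,A)$ as claimed. To obtain the torsion condition I would restrict to each cyclic factor: pulling $f-g$ back along the inclusion $\Integer_{d_i} \hookrightarrow G$ of the $i$-th coordinate kills every $\chi_k$ with $k \ne i$ and leaves the standard cyclic cocycle $\chi\cdot w_i$ on $\Integer_{d_i}$. Restriction carries coboundaries to coboundaries, so this cocycle is a coboundary, and since its class in $H^2(\Integer_{d_i}, A) \cong A/d_iA$ is $w_i + d_i A$, we conclude $w_i \in d_i A$. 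This completes the $\Rightarrow$ direction and hence the proof.
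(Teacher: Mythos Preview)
Your argument is correct, but it takes a genuinely different route from the paper. The paper does not compute the normal form at this stage; instead it observes that $f-g\in B^2(G,A)$ means $E_1$ and $E_2$ are equivalent, pulls the representative $u\in\factored(G,E_2)$ back through the equivalence isomorphism $\tau:E_2\to E_1$ to obtain $t=\tau u\in\factored(G,E_1)$ with $g=\partial t$, and then shows by a direct centrality computation that $\partial s-\partial t=\partial(st^{-1})$ with $st^{-1}(x)=a_1^{-x_1}\cdots a_m^{-x_m}\in\factored(G,A)$. No symmetry argument, no cyclic cohomology, no normal form.

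Your approach instead front-loads the explicit normal form for factored cocycles (essentially the content of the paper's \emph{subsequent} Lemmas~\ref{lem:factored-coboundary}--\ref{lem:compatible}) and then extracts the conclusion from two clean cohomological facts: symmetry of coboundaries over an abelian base kills the commutator terms, and restriction to each cyclic factor plus $H^2(\Integer_{d_i},A)\cong A/d_iA$ forces $w_i\in d_iA$. This is more conceptual and arguably more illuminating, and it makes the later lemmas nearly redundant---you have effectively proved \Lemma{compatible} along the way. The price is that you must carry out the normal-form reordering carefully (and your commutator convention $c_{ij}=[a_i,a_j]$ differs in sign from the paper's $\beta_{i,j}=[s_i,s_j^{-1}]$, though this is harmless). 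The paper's route, by contrast, keeps this lemma logically independent of the normal-form machinery at the cost of invoking the extension-equivalence correspondence to produce~$\tau$.
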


\begin{proof}
Since $B^2_\factored(G, A) \subset B^2(G, A)$, the reverse direction is immediate.

For the forward direction, suppose that $f - g \in B^2(G, A)$. We know that $f =
\partial s$ for some $s \in \factored(G, E_1)$. Since $g$ differs from $f$ by a
coboundary, $E_1$ and $E_2$ are equivalent extensions. This means, in
particular, that there exists an isomorphism $\tau : E_2 \rightarrow E_1$
respecting $A$ and $G$. Now, let $u \in \factored(G, E_2)$ be such that $g =
\partial u$. Then we can see that \[ \tau(g(x,y)) = \tau(\partial u(x, y)) =
\tau(u(x)u(y)u(x+y)^{-1}) = \tau u(x) \tau u(y) (\tau u(x+y))^{-1}. \] Since
$g(x,y) \in A$ and $\tau$ restricts to identity on $A$, we see that $g(x,y) =
\tau g(x,y) = (\partial \tau u)(x, y)$. Thus, $g$ can be realized as $\partial
t$ for some $t : G \rightarrow E_1$, namely, $t = \tau u$. Futhermore, since $u$
is of the form $u(x_1, \dots, x_m) = u_1^{x_1} \dots u_m^{x_m}$, we see that
$t(x_1, \dots, x_m) = \tau u(x_1, \dots, x_m) = (\tau u_1)^{x_1} \dots (\tau
u_m)^{x_m}$, which shows that $t \in \factored(G, E_1)$ with $t_i \triangleq
\tau u_i$ the representative of $e_i$ for each $i \in [m]$.

The above shows that we can restrict our attention to considering $f - g =
\partial s - \partial t$, where $s, t \in \factored(G, E_1)$. In this case, we
can compute \[ f(x)-g(y) = s(x)s(y)s(x+y)^{-1}(t(x)t(y)t(x+y)^{-1})^{-1} =
s(x)s(y)s(x+y)^{-1}t(x+y)t(y)^{-1}t(x)^{-1}. \] Now, note that
$s(x+y)^{-1}t(x+y) \in A$ since \[ \pi(s(x+y)^{-1} t(x+y)) = (\pi s(x+y))^{-1}
\pi t(x+y) = -(x+y) + (x+y) = 0 \] in $G$. Since $A$ is central in $E$, we can
move $s(x+y)^{-1}t(x+y)$ to the end. This leaves $s(y)t(y)^{-1}$ adjacent. Since
this is in $A$ for the same reason, we can rearrange this as well. Thus, we have
$f(x)-g(y) = s(x)t(x)^{-1}s(y)t(y)^{-1}s(x+y)^{-1}t(x+y)$. This is close, but
not identical, to \[ \partial(s t^{-1})(x,y) = s(x)t(x)^{-1}s(y)t(y)^{-1}
(s(x+y)t(x+y)^{-1})^{-1}, \] the only difference being the order of the last two
factors.

We can show, however, that these two terms commute. In particular, let $x =
(x_1, \dots, x_m)$. Then we have $s(x_1, \dots, x_m) = s_1^{x_1} \dots
s_m^{x_m}$ and $t(x_1, \dots, x_m) = t_1^{x_1} \dots t_m^{x_m}$ so that
$s(x)t(x)^{-1} = s_1^{x_1} \dots s_m^{x_m} t_m^{-x_m} \dots t_1^{-x_1}$. Since
$s_m$ and $t_m$ are both representatives of $e_m \in G$, we know that
$s_m^{x_m}t_m^{-x_m} \in A$, which means we can move this term to the end.
Repeating this as above, we have $s(x)t(x)^{-1} = s_1^{x_1} t_1^{-x_1} \dots
s_m^{x_m} t_m^{-x_m}$. Now, since $s_m$ and $t_m$ are both representatives of
$e_m$, they must differ by a factor of some $a_m \in A$, so we have $t_m = s_m
a_m$, which means that $s_m^{x_m} t_m^{-x_m} = s_m^{x_m} s_m^{-x_m} a_m^{-x_m}$,
and more generally, $s(x)t(x)^{-1} = a_1^{-x_1} \dots a_m^{-x_m}$. Now, if we
compute the product in the other order, we have $t(x)^{-1}s(x) = t_m^{-x_m}
\dots t_1^{-x_1} s_1^{x_1} \dots s_m^{x_m} = t_1^{-x_1} s_1^{x_1} \dots
t_m^{-x_m} s_m^{x_m}$ by the same rearranging as before, and since $t_1^{-x_1}
s_1^{x_1} = s_1^{-x_1} a_1^{-x_1} s_1^{x_1} = a_1^{-x_1}$ (using the fact that
    $A$ is central in $E_1$), we can see that $t(x)^{-1}s(x) = a_m^{-x_m} \dots
a_1^{-x_1}$. This is equal to what we computed for $s(x)t(x)^{-1}$ since $A$ is
abelian, so we have shown that $f(x)-g(y) = \partial(st^{-1})(x,y)$.

If we let $v : G \rightarrow E_1$ be defined by $v(x) = s(x)t(x)^{-1}$, then we have shown above that $f - g = \partial v$. In particular, we showed $v(x_1, \dots, x_m) = a_1^{-x_1} \dots a_m^{-x_m}$, which means that $v \in \factored(G, A)$ with $v_i = a_i^{-1}$. Thus, we have seen that $f - g \in \partial \factored(G, A) = B^2_\factored(G, A)$.
\end{proof}

The following two lemmas tell us more about what elements in these groups look
like.

\begin{lemma}
\LemmaName{factored-coboundary}
If $h \in B^2_\factored(G, A)$, then there exist $\alpha_1, \dots, \alpha_m \in
A$ such that $h(x,y) = \prod_{i=1}^m \alpha_i^{\delta_i}$, where $\delta_i = 1$
if $x_i + y_i \ge d_i$ and 0 otherwise and $\alpha_i = a_i^{d_i}$ for some
$a_i$.
\end{lemma}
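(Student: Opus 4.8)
The plan is to prove this by direct computation, unwinding the definition $B^2_\factored(G, A) = \partial \factored(G, A)$. Given $h \in B^2_\factored(G, A)$, there is some $s \in \factored(G, A)$ with $h = \partial s$, which means $s(x_1, \dots, x_m) = s_1^{x_1} \cdots s_m^{x_m}$ for some fixed $s_1, \dots, s_m \in A$, where each coordinate $x_i$ is taken in its canonical representative range $\{0, 1, \dots, d_i - 1\}$. The goal is then to evaluate $\partial s(x,y) = s(x)\,s(y)\,s(x+y)^{-1}$ and read off the claimed form.

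First I would use that $A$ is abelian to collect terms: writing $x = (x_1, \dots, x_m)$ and $y = (y_1, \dots, y_m)$, all the factors $s_i$ commute, so $s(x)\,s(y) = \prod_{i=1}^m s_i^{x_i + y_i}$, where here $x_i + y_i$ denotes the ordinary integer sum (which may be as large as $2 d_i - 2$). Separately, since addition in $G$ is coordinatewise modulo $d_i$, the third factor is $s(x+y) = \prod_{i=1}^m s_i^{(x_i + y_i) \bmod d_i}$. Combining these, the exponent of $s_i$ appearing in $\partial s(x,y)$ is $x_i + y_i - \left( (x_i + y_i) \bmod d_i \right)$.

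The one point requiring care is the analysis of this exponent, though it is elementary. Because $x_i, y_i \in \{0, \dots, d_i - 1\}$, their integer sum satisfies $0 \le x_i + y_i \le 2 d_i - 2 < 2 d_i$, so the reduction modulo $d_i$ subtracts either nothing or exactly $d_i$. Concretely, $x_i + y_i - \left( (x_i + y_i) \bmod d_i \right)$ equals $0$ when $x_i + y_i < d_i$ and equals $d_i$ when $x_i + y_i \ge d_i$, which is precisely $d_i \delta_i$ for $\delta_i$ as defined in the statement. Hence $\partial s(x,y) = \prod_{i=1}^m s_i^{d_i \delta_i}$.

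Finally, setting $a_i \triangleq s_i$ and $\alpha_i \triangleq a_i^{d_i} = s_i^{d_i}$ gives $h(x,y) = \partial s(x,y) = \prod_{i=1}^m \alpha_i^{\delta_i}$ with $\alpha_i = a_i^{d_i}$, exactly as claimed. I expect essentially no obstacle here beyond correctly tracking the distinction between the integer exponents arising in $A$ and the modular coordinate arithmetic in $G$; that bookkeeping is the only substantive step, and it is what forces the exponents to be exact multiples of $d_i$.
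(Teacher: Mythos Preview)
Your proposal is correct and follows essentially the same approach as the paper: unwind $h=\partial s$ for $s\in\factored(G,A)$, use commutativity of $A$ to collect exponents, and observe that the exponent $x_i+y_i-((x_i+y_i)\bmod d_i)$ is $d_i\delta_i$ since $0\le x_i+y_i<2d_i$. The paper's proof is identical up to naming the generators $a_i$ rather than $s_i$.
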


\begin{proof}
If $h$ is as above, we know that $h = \partial v$ for some $v \in
C^1_\factored(G, A)$, where $v$ is of the form $v(x_1, \dots, x_m) = a_1^{x_1}
\dots a_m^{x_m}$ for some $\{a_i \in A\}$. Since $A$ is abelian, we can see that
\[ h(x, y) = v(x_1, \dots, x_m) v(y_1, \dots, y_m) v(x_1+y_1, \dots,
x_m+y_m)^{-1} = \prod_{i=1}^m a_i^{x_i} a_i^{y_i} a_i^{-(x_i + y_i) \bmod d_i}
\] because $x_i + y_i$ in $G$ is computed mod $d_i$. If $x_i + y_i < d_i$, then
the mod has no effect, and we see that $h(x, y) = e$. On the other hand, if $x_i
+ y_i \ge d_i$, then $-(x_i + y_i) \bmod d_i = -x_i - y_i + d_i$. This means
that $a_i^{x_i} a_i^{y_i} a_i^{-(x_i + y_i) \bmod d_i} = a_i^{d_i}$, so we can
see that $h(x, y) = \prod_{i=1}^m a_i^{d_i \delta_i}$, where each $\delta_i$ is
defined as in the statement of the lemma. We get the form in the statement by
defining $\alpha_i = a_i^{d_i}$.
\end{proof}

\begin{lemma}
\LemmaName{factored-cocycle}
If $f \in Z^2_\factored(G, A)$, so that $f = \partial s$ for some $s  \in
\factored(G, E)$, then there exist $\{\alpha_i \in A\}_{1 \le i \le m}$ and
$\{\beta_{i,j} \in A\}_{1 \le i < j \le m}$ such that $f(x,y) = \prod_{1 \le i
\le m} \alpha_i^{\delta_i} \prod_{1 \le i < j \le m} \beta_{i,j}^{y_i x_j} $,
where $\delta_i$ is defined as in the previous lemma, $\alpha_i = s_i^{d_i}$,
and $b_{i,j} = [s_i, s_j^{-1}]$.
\end{lemma}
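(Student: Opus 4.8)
The plan is to compute $f(x,y) = \partial s(x,y) = s(x)\,s(y)\,s(x+y)^{-1}$ directly and massage it into the claimed product form. The crucial structural fact I would establish first is that every ``error term'' arising in this computation lands in $A$ and is therefore central in $E$. Since $G$ is abelian and $\pi : E \rightarrow G$ is a homomorphism, $\pi([s_i, s_j]) = [\pi s_i, \pi s_j] = [e_i, e_j] = e$, so each commutator $[s_i, s_j]$ lies in $\ker \pi = A$; likewise $\pi(s_i^{d_i}) = e_i^{d_i} = e$ in $G$, so $\alpha_i = s_i^{d_i} \in A$. Because $A \le Z(E)$, all of these elements commute with everything, which is exactly what makes the rearrangements below legal and lets all the resulting factors be collected into a single product at the end.

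Next I would sort the interleaved product $s(x)\,s(y) = s_1^{x_1}\cdots s_m^{x_m}\, s_1^{y_1}\cdots s_m^{y_m}$ into the merged order $s_1^{x_1+y_1}\cdots s_m^{x_m+y_m}$. The key tool is the power-commutator identity $a^p b^q = [a,b]^{pq}\, b^q a^p$, valid whenever $[a,b]$ is central, which I would check by a short induction on $p$ and $q$. Moving each block $s_i^{y_i}$ leftward into position past $s_j^{x_j}$ for every $j > i$ produces exactly one central factor $[s_j, s_i]^{y_i x_j}$ per such pair (and no others, since at that stage no further $y$-blocks lie in its path), so sorting contributes the total central factor $\prod_{i<j}[s_j, s_i]^{y_i x_j}$. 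Because these commutators are central, the identity $[s_i, s_j^{-1}] = b^{-1}[s_j,s_i]b = [s_j, s_i]$ (specialized to the central case) rewrites this as $\prod_{i<j}\beta_{i,j}^{y_i x_j}$ with $\beta_{i,j} = [s_i, s_j^{-1}]$, matching the statement.

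It then remains to reduce the merged exponents modulo the $d_i$. Writing $z_i = (x_i + y_i)\bmod d_i$, we have $x_i + y_i = z_i + d_i\delta_i$ with $\delta_i$ as in \Lemma{factored-coboundary}, so $s_i^{x_i+y_i} = s_i^{z_i}\alpha_i^{\delta_i}$; pulling the central factors $\alpha_i^{\delta_i}$ to the front turns the merged word into $\bigl(\prod_i \alpha_i^{\delta_i}\bigr) s_1^{z_1}\cdots s_m^{z_m} = \bigl(\prod_i \alpha_i^{\delta_i}\bigr) s(x+y)$. Combining the two steps gives $s(x)\,s(y) = \bigl(\prod_i \alpha_i^{\delta_i}\bigr)\bigl(\prod_{i<j}\beta_{i,j}^{y_i x_j}\bigr) s(x+y)$, and multiplying on the right by $s(x+y)^{-1}$ yields the claimed formula for $f(x,y)$, with the two central products reorderable since $A$ is abelian. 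The only real obstacle is the bookkeeping in the sorting step: tracking precisely which pairs $(i,j)$ contribute, with exponent $y_i x_j$, and confirming that the commutator convention works out to $[s_i, s_j^{-1}]$ rather than its inverse. Everything else is routine once centrality is in hand, since it permits free reordering and collection of the abelian factors.
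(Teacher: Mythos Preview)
Your proposal is correct and follows essentially the same approach as the paper: both arguments hinge on the observations that $s_i^{d_i}$ and each commutator $[s_i,s_j]$ lie in $A$ (via $\pi$) and are therefore central, and then perform the commutator bookkeeping to extract the factors $\alpha_i^{\delta_i}$ and $\beta_{i,j}^{y_i x_j}$. The only cosmetic difference is the order of operations---you first sort $s(x)s(y)$ into $s_1^{x_1+y_1}\cdots s_m^{x_m+y_m}$ (producing the $\beta$'s) and then reduce exponents mod $d_i$ (producing the $\alpha$'s), whereas the paper expands the full product $s(x)s(y)s(x+y)^{-1}$, pulls out the $\alpha$'s from the $s(x+y)^{-1}$ side first, and then collapses the remaining word from the middle outward; the commutators that appear ($[s_j,s_i]$ versus $[s_i,s_j^{-1}]$) coincide by exactly the conjugation identity you cite.
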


\begin{proof}
By definition, we have \[ f(x,y) = s(x)s(y)s(x+y)^{-1} = s_1^{x_1} \cdots
s_m^{x_m} s_1^{y_1} \cdots s_m^{y_m} s_m^{-(x_m + y_m) \bmod d_m} \cdots
s_1^{-(x_1 + y_1) \bmod d_1}. \] As in the previous lemma, we can rewrite this
as \[ f(x,y) = s_1^{x_1} \cdots s_m^{x_m} s_1^{y_1} \cdots s_m^{y_m} s_m^{-x_m -
y_m + d_m \delta_m} \cdots s_1^{-x_1 - y_1 + d_1 \delta_1}. \]

We can begin by using the fact that $s_i^{d_i} \in A$ for each $i$. This follows
because $\pi(s_i^{d_i}) = \pi(s(e_i)^{d_i}) = (0, \dots, d_i, \dots, 0) = 0$
since the $i$-th part of $G$ is $\Integer_{d_i}$, meaning addition is modulo
$d_i$.

Thus, we can define $\alpha_i \triangleq s_i^{d_i}$. Since $A$ is abelian, we
can pull all of these factors to the front. This puts $f$ in the form \[ f(x,y)
= \left( \prod_{i=1}^m \alpha_i^{\delta_i} \right) s_1^{x_1} \cdots s_m^{x_m}
s_1^{y_1} \cdots s_m^{y_m} s_m^{-x_m - y_m} \cdots s_1^{-x_1 - y_1}. \]

In the middle of the latter product, we have $s_{m-1}^{y_{m-1}} s_m^{y_m} s_m^{-x_m -y_m} s_{m-1}^{-x_{m-1} - y_{m-1}}$. We can cancel $s_m^{y_m}$ and $s_m^{-y_m}$, leaving us with $s_{m-1}^{y_{m-1}} s_m^{-x_m} s_{m-1}^{-x_{m-1} - y_{m-1}}$. In order to cancel the $s_{m-1}^{y_{m-1}}$, we first have to move it past the $s_m^{-x_m}$. We can do this by introducing a commutator that compensates for the order change. This allows the $s_{m-1}^{y_m}$ factor to cancel, leaving us with  $[s_{m-1}^{y_{m-1}}, s_m^{x_m}] s_{m-1}^{-x_{m-1}}$.

More generally, we can consider $[s(u), s(v)]$ for any $u, v \in G$. We can see
that \[ \pi [s(u), s(v)] = \pi(s(u) s(v) s(u)^{-1} s(v)^{-1}) = \pi s(u) \pi
s(v) \pi s(u)^{-1} \pi s(v)^{-1} = u + v - u - v = 0, \] which means that
$[s(u), s(v)] \in A$. In particular, this means that we can move commutators to
the front.

Hence, we can simplify $s_1^{x_1} \cdots s_m^{x_m} s_1^{y_1} \cdots s_m^{y_m}
s_m^{-x_m - y_m} \cdots s_1^{-x_1 - y_1}$ by introducing commutators to move
each factor of $s_j^{-x_j}$ in front of each remaining factor of $s_i^{y_i}$.
In the example above, we saw that there was no moving required for $i=m$, while
$i=m-1$ only need to move past $j=m$. In general, will need to swap each pair of
this form with $i < j$. Each such swap introduces a commutator, but since these
are all in $A$, we can immediately move them to the front and continue swapping
these factors and canceling the matching factors until nothing remains.

Finally, note that a swap of $s_i^{y_i}$ and $s_j^{-x_j}$ can be thought of as a
number of swaps between $s_i$'s and $s_j^{-1}$'s. Since each of the $y_i$ copies
of the first must move past each of the $x_j$ copies of the second, we see that
there are $y_i x_j$ swaps overall. Thus, we can write the commutator as $[s_i,
s_j^{-1}]^{y_i x_j}$, giving us the form in the statement of the lemma.
\end{proof}

The following is the main result needed for our algorithm.

\begin{lemma}
\LemmaName{compatible}
Let $f, f' \in Z^2_\factored(G, A)$. Write these in the form of the previous
lemma with $\{\alpha_i\}, \{\beta_{i,j}\}$ for $f$ and $\{\alpha_i'\}$ and
$\{\beta_{i,j}'\}$ for $f'$. Then $f - f' \in B^2_\factored(G, A)$ iff
$\beta_{i,j} = \beta_{i,j}'$ for all $1 \le i < j \le m$ and $(\alpha_i)^{-1}
\alpha_i'$ has a $d_i$-th root in $A$.
\end{lemma}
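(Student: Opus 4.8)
The plan is to compute the difference $f - f'$ explicitly using the normal form from \Lemma{factored-cocycle} and then compare it against the normal form of a factored coboundary from \Lemma{factored-coboundary}. Since $A$ is abelian and both $f$ and $f'$ are written with the \emph{same} exponent functions $\delta_i$ and $y_i x_j$, combining them pointwise gives
\[ (f - f')(x,y) = \prod_{1 \le i \le m} (\alpha_i (\alpha_i')^{-1})^{\delta_i} \prod_{1 \le i < j \le m} (\beta_{i,j} (\beta_{i,j}')^{-1})^{y_i x_j}, \]
where $\delta_i$ is the carry indicator from the earlier lemmas. The whole argument then reduces to understanding when a function of this shape lies in $B^2_\factored(G,A)$.

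For the reverse direction I would assume $\beta_{i,j} = \beta_{i,j}'$ for all $i < j$ and that each $(\alpha_i)^{-1}\alpha_i'$ has a $d_i$-th root. The $\beta$ factors then cancel, and writing $\alpha_i(\alpha_i')^{-1} = a_i^{d_i}$ (possible by the root hypothesis, taking $a_i$ to be the inverse of a $d_i$-th root of $(\alpha_i)^{-1}\alpha_i'$) puts $f - f'$ in exactly the form $\prod_i (a_i^{d_i})^{\delta_i}$ produced by $\partial v$ for $v \in \factored(G,A)$ with $v_i = a_i$. By \Lemma{factored-coboundary} this exhibits $f - f'$ as a member of $B^2_\factored(G,A)$, so this direction is essentially a direct substitution.

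The forward direction is the crux. Assuming $f - f' = h$ with $h \in B^2_\factored(G,A)$, \Lemma{factored-coboundary} gives $h(x,y) = \prod_i \gamma_i^{\delta_i}$ with each $\gamma_i$ a $d_i$-th power. Equating this with the expression above and moving everything to one side yields
\[ \prod_{1 \le i \le m} (\alpha_i (\alpha_i')^{-1}\gamma_i^{-1})^{\delta_i} \prod_{1 \le i < j \le m} (\beta_{i,j} (\beta_{i,j}')^{-1})^{y_i x_j} = e \quad\text{for all } x, y \in G. \]
The key step --- and the main obstacle --- is to argue that the exponent functions $\{\delta_i\}_i$ and $\{(x,y)\mapsto y_i x_j\}_{i<j}$ are independent enough that each group coefficient must individually be trivial. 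Because $\delta_i$ is a carry bit rather than a linear function, I cannot simply invoke linear algebra; instead I would isolate coefficients by evaluating at carefully chosen points (assuming, as we may after dropping trivial factors, that every $d_i \ge 2$). To isolate $\beta_{i,j}$, set $y_i = 1$ and $x_j = 1$ with all other coordinates zero: then every $\delta_k$ vanishes (each $x_k + y_k \le 1 < d_k$) and the only surviving product $y_k x_\ell$ with $k < \ell$ is $y_i x_j = 1$, forcing $\beta_{i,j} = \beta_{i,j}'$. To isolate $\alpha_i$, set $x_i = d_i - 1$ and $y_i = 1$ with all other coordinates zero: then $\delta_i = 1$ while every other $\delta_k$ vanishes and no product $y_k x_\ell$ with $k < \ell$ is nonzero, forcing $\alpha_i(\alpha_i')^{-1} = \gamma_i$, a $d_i$-th power; equivalently $(\alpha_i)^{-1}\alpha_i'$ has a $d_i$-th root.

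Verifying that these evaluation points genuinely annihilate every unwanted term --- the bookkeeping of which coordinates of $x$ and $y$ are nonzero and the check that the carries behave as claimed --- is the one place requiring care, but it is a finite case analysis rather than a deep difficulty. Once the coefficient-isolation is established, both implications of the stated equivalence fall out immediately.
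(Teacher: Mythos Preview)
Your argument is correct and, for the forward direction, essentially identical to the paper's: both evaluate at the points $(e_j,e_i)$ (your ``$y_i=1$, $x_j=1$'') and $((d_i-1)e_i,e_i)$ to isolate the $\beta_{i,j}$ and $\alpha_i$ coefficients respectively. Your bookkeeping is in fact slightly cleaner than the paper's here, and your explicit note that one needs $d_i\ge 2$ is a detail the paper glosses over.

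The genuine difference is in the reverse direction. The paper does not compute $f-f'$ directly; instead it replaces the representative $s_i$ by $s_i a_i$ (with $a_i$ a $d_i$-th root of $(\alpha_i)^{-1}\alpha_i'$) to build a third cocycle $f''$, checks from the explicit formulas that $f''=f'$, and then appeals to \Lemma{factored} to pass from $f-f''\in B^2(G,A)$ to $f-f''\in B^2_\factored(G,A)$. Your approach bypasses this: having the closed forms of \Lemma{factored-cocycle} and \Lemma{factored-coboundary} in hand, you simply observe that when the $\beta$'s match and the $\alpha$ quotients are $d_i$-th powers, the explicit difference $f-f'$ already has the shape $\prod_i (a_i^{d_i})^{\delta_i}$, which is visibly $\partial v$ for $v\in\factored(G,A)$. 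This is more elementary --- it never invokes \Lemma{factored} --- but it does use the (easy) converse of \Lemma{factored-coboundary}, which that lemma states only in one direction; you should note that this converse is immediate from the computation in its proof. The paper's route, by contrast, makes the connection to ``change of representatives'' explicit, which is conceptually illuminating even if it costs an extra lemma.
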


\begin{proof}
We begin with the reverse direction. Let $a_i \in A$ be a $d_i$-th root of
$(\alpha_i)^{-1} \alpha_i'$. Recall that $\alpha_i = s_i^{d_i}$. Replacing
$s_i$ with $s_i a_i$ gives another valid set of representatives and, hence,
an extension equivalent to $f'$. Defining $f''$ using this set of
representatives gives an $\alpha_i'' = s_i^{d_i} a_i^{d_i} = \alpha_i
(\alpha_i)^{-1} \alpha_i' = \alpha_i'$. Since $f$ and $f'$ agree on the
$\beta_{i,j}$'s and including extra factors from $A$ does not change the
$\beta_{i,j}$'s (since $A$ is central and $\beta_{i,j}$ is a commutator), we see
that $f''$ and $f'$ agree on both the $\alpha_i$'s and $\beta_{i,j}$'s, so $f''
= f'$. Next, since $f$ and $f''$ arise by choosing different representatives for
the same extension, we know that $f - f'' \in B^2(G, A)$.  However, since $f,
f'' \in Z^2_\factored(G, A)$, we have $f - f'' \in B^2_\factored(G, A)$ by
\Lemma{factored}. Thus, we can see that $f - f' = (f - f'') + (f'' - f') = f -
f'' \in B^2_\factored(G, A)$.

For the forward direction, we will separately prove the two implications, that
$f - f' \in B^2_\factored(G, A)$ implies the condition on the $\beta_{i,j}$'s
and that it implies the condition on the $\alpha_i$'s.

For the condition on the $\beta_{i,j}$'s, we will prove the contrapositive.
First, suppose that $\beta_{i,j} \not= \beta_{i,j}'$ for some $i < j$. From the
formula in \Lemma{factored-coboundary}, we can see that $h(e_i, e_j) = 0$ for
any $h \in B^2_\factored(G, A)$. On the other hand, from the formula in
\Lemma{factored-cocycle}, we see that $f(e_i, e_j) = \beta_{i,j} \not=
\beta_{i,j}' = f'(e_i, e_j)$. Since every coboundary is 0 on this pair, we
conclude that $f - f' \not\in B^2_\factored(G, A)$.

Now, we prove the condition on the $\alpha_i$'s.  Suppose that $h \triangleq f'
- f \in B^2_\factored(G, A)$. From the formula in \Lemma{factored-coboundary},
writing the constants for $h$ as $\alpha_i''$, we can see that $h(e_i,
(d_i-1)e_i) = \alpha_i'' = a_i^{d_i}$. From the formula in
\Lemma{factored-cocycle}, we see that $f(e_i, (d_i-1)e_i) = \alpha_i$ and
$f'(e_i, (d_i-1)e_i) = \alpha_i'$.  Taking $f' - f = h$ at the pair $(e_i,
(d_i-1)e_i)$ and writing with multiplicative notation, we see that $\alpha_i'
(\alpha_i)^{-1} = \alpha_i'' = a_i^{d_i}$. Since $(\alpha_i)^{-1} \alpha_i' =
\alpha_i' (\alpha_i)^{-1}$ (both are in $A$), we see that the $d_i$-th root
exists.

Thus, we have seen that, if the condition on the $\beta_{i,j}$'s and
$\alpha_i$'s does not hold (so either the $\beta_{i,j}$ condition does not hold
or the $\alpha_i$ condition does not hold), it is impossible to have $f - f' \in
B^2_\factored(G, A)$.
\end{proof}

We now have the necessary tools required to prove the theorem in this case.

\begin{proof}[Proof of \Theorem{quantum-test-large}]
Assuming that we can compute a factor set in $Z^2_\factored(G, A)$ for each
extension, we only need to compute the $\alpha_i$'s and $\beta_{i,j}$'s from
\Lemma{compatible} for each factor set and check whether they satisfy the
conditions of the last lemma.

We saw in the proof of the lemma that these constants can be found simply by
evaluating the factor set at particular points. There are only $O(m^2) =
O(\log^2 \abs{G})$ constants to compute. Given the simple form of each $f \in
Z^2_\factored(G, A)$, it is clear that we can perform these evaluations
efficiently. Thus, we can efficiently determine the $\alpha_i$'s and
$\beta_{i,j}$'s.

For the $\beta_{i,j}$'s, the conditions of \Lemma{compatible} require us simply
to check equality, which we can do for each $(i,j)$ with one call to the oracle
for $A$. For the $\alpha_i$'s, on the other hand, we need to determine whether
the quotient of two $\alpha_i$'s is a $d_i$-th root.

Recalling that $A$ is an abelian group, we can switch back to additive notation. Our goal is to determine whether there exists an $a \in A$ such that $d_i a = \alpha_i' - \alpha_i$. Since $A$ is isomorphic to a product $\Integer_{n_1} \times \dots \times \Integer_{n_k}$, this splits into $k$ independent equations. For each $1 \le j \le k$, we want to find an $a_j$ such that $d_i a_j = (\alpha_i' - \alpha_i)_j \pmod{n_k}$ or, equivalently, if there exist $a_j$ and $b_j$ such that $a_j d_i + b_j n_k = (\alpha_i' - \alpha_i)_j$. Let $d$ be the greatest common denominator of $d_i$ and $n_k$. We can solve this equation iff $d$ divides $(\alpha_i' - \alpha_i)_j$.

Thus, for the $\alpha_i$'s, the conditions of \Lemma{compatible} require us to
compute the $\alpha_i$'s, split them into the parts of the direct product, and
then check whether the difference in each component is divisible by the greatest
common denominator of $d_i$ and $n_k$. We get $d_i$ by decomposing $G$ into a
direct product of cyclic groups using the algorithm of \cite{DecompAbGroups}. We
apply the same algorithm to $A$ to find $n_k$ and the $(\cdot)_j$ components
of $a_i' - \alpha_i$ needed above.\footnote{The algorithm of
\cite{DecompAbGroups} computes not only generators for the factors of the direct
product but also formulas (the vectors $\mathbf{y}_i$) for converting from the
original generators to the new ones. The map taking $\mathbf{e}_i \mapsto
\mathbf{y}_i$ is invertible, so we can efficiently compute the reverse direction
(from new generators to the original ones) as well.} simply need to check
divisibility for $O(\log \abs{G})$-bit numbers, which we can do efficiently on a
classical computer. Since the quantum algorithm of \cite{DecompAbGroups} is
efficient, we have seen that there is an efficient quantum algorithm for testing
whether the difference of two factor sets is a coboundary.

It remains to describe how to compute each factor set or, more specifically, the
representatives $s_1, \dots, s_m$ for each of the direct factors (since we can
efficiently evaluate a factor set given these numbers). As in our earlier
quantum algorithm, we can produce nearly uniformly random elements from $E$ and
then apply the oracle to find the corresponding elements of $G$. This process
gives us nearly uniformly random elements of $G$. As we have seen before, we
need only $O(\log \abs{G})$ random elements to get a set that generates all of
$G$. The key fact is that we have not only a generating set for $G$ but rather a
generating set for $G$ with each generator coming from an element in $E$.

Since these generate $G$, we know that, for each $i \in [m]$, there exists a
product that gives $e_i \in G$. The corresponding product of elements of $E$ is
thus a representative of $e_i$. To find this product, we apply the algorithm of
\cite{DecompAbGroups} to express $G$ as a direct product of cyclic groups and
get the relations for converting from the generators we have to the standard
generators for the direct factors. These relations come in the form of an
$O(\log \abs{G}) \times O(\log \abs{G})$ matrix. For each $i \in [m]$, one
column of this matrix gives the relation for generating $e_i$ as a product of
powers of $O(\log \abs{G})$ of our random elements. Since we can compute powers
efficiently and this matrix is small, we can efficiently compute this product to
get $e_i$. More importantly, we can compute the product of the elements of $E$
corresponding to these generators to produce a representative of $e_i$. This is
a valid choice for $s_i$.

In summary, we find a set of representatives $\{s_i\}$ for each extension that
allows us to efficiently compute a factor set in $Z^2_\factored(G, A)$.  Then,
we can check whether their difference lies in $B^2_\factored(G, A)$ by computing
the $\alpha_i$'s and $\beta_{i,j}$'s for each extension and checking the
conditions of the lemma. As we saw above, both of these steps can be performed
efficiently on a quantum computer.
\end{proof}

\subsection{Algorithms for Non-Central Extensions} \label{ssec:non-central}

It is not hard to extend our algorithms to general extensions, i.e., without the
assumption that $A$ is central in $E_1$ and $E_2$.

The core fact needed by both algorithms is the correspondence between
equivalence classes of extensions and elements of $H^2(G, A)$ given in
\Lemma{Correspondence}. This relationship indeed holds for general extensions
(i.e., under the assumption that $A$ is abelian but not necessarily central).
However, in the general setting, the definition of $H^2(G, A)$ is more complex.

If $E$ is an extension of $G$ by $A$ and $t \in E$ is a representative of $g \in
G$, then it does not hold that $t^{-1} a t = a$ for all $a \in A$ if $A$ is
not central. It is easy to check that $t^{-1} a t \in A$, however, and that any
two representatives of $g \in G$ define the same action $a \mapsto a^t
\triangleq t^{-1} a t$. In fact, this defines a homomorphism $\varphi : G
\rightarrow \Aut A$, as occurs in a semi-direct product.

In the general case, extensions are identified not only by the groups $G$ and
$A$ but also by $\varphi : G \rightarrow \Aut A$. Two extensions of $G$ by $A$
with action $\varphi$ are equivalent if there exists a structure preserving
isomorphism, as before. \Lemma{Correspondence} then holds using a definition of
$H^2(G, A)$ that changes the formula for $\partial$ to include $\varphi$.

In our algorithms, the only change is that we must use the new formula when
constructing a generating set for $B^2(G, A)$. This new formula is 
$(\partial f)(x, y) \triangleq f(x)^y + f(y) - f(xy)$, where the action $a^y$ of
$G$ on $A$ is given by $\varphi$. Since this action is just conjugation by a
representative and we have a representative for each $y \in G$, it is clear that
we can compute this formula just as well.  Hence, we can efficiently test
equivalence of non-central group extensions of $G$ by $A$, in both models, with
the same running times.

\subsection{Impossibility for Classical Algorithms in the Black Box Model}
\label{ssec:hardness}

In this subsection, we show that the problem solved by our quantum algorithm is
classically hard under the assumption of the Goldwasser--Micali cryptosystem
that quadratic residuosity is classically hard. Our proof is a reduction from
quadratic residuosity to testing equivalence of \emph{central} extensions.
Hence, this argues that the problem for black box groups is hard even for the
simpler case of central extensions.

\begin{proof}[Proof of \Theorem{hardness}]
The inputs to quadratic residuosity are a large number $N$ and a $y \in
\Integer_N^*$, the group of multiplicative units modulo $N$. (We are also
assured that the Jacobi symbol of $y$ is $+1$, though that will play no part in
the construction.) Both of these inputs are encoded in $O(\log N)$ bits, so an
algorithm is only efficient if it runs in $O(\poly \log N)$ time.

The objective for this problem is to determine whether $y$ has a square root in
$\Integer_N^*$, that is, whether there exists an $x \in \Integer_N^*$ such that
$y = x^2 \pmod N$. If such an $x$ exists, $y$ is called a ``quadratic residue''.
Our reduction will construct two central extensions of $\Integer_2$ by
$\Integer_N^*$ that are equivalent iff $y$ is a quadratic residue. Since
$\Integer_2$ is both small and abelian, this is a special case of \emph{both
models} we considered for quantum algorithms. Hence, this one reduction will
show that both problems are as hard as quadratic residuosity.

As mentioned above, we can create a group extension from any factor set $f :
\Integer_2 \times \Integer_2 \rightarrow \Integer_N^*$. If we know the values of
this function, then we can perform multiplication by $(x,a)(y,b) = (xy f(a,b),
a+b)$.\footnote{Note that the group operation in $\Integer_N^*$, while abelian,
is usually written as multiplication, while that of $\Integer_2$ is written as
addition. We will follow those conventions in this section. Note, however, that
we used the opposite conventions for $A$ and $G$ in earlier sections.}
It is well-known that we can perform group operations in $\Integer_N^*$ in
$O(\poly \log N)$ time, and group operations in $\Integer_2$ take constant time,
so this computation can be performed efficiently. Likewise, the inverse of $(x,
a)$, given by $(x^{-1} f(a, -a)^{-1}, -a)$, can also be computed efficiently.
Finally, we can easily check for the identity element, which is $(1, 0)$. This
shows that we can efficiently provide an oracle for these extensions, once we
have chosen their factor sets.

Each factor set provides only four outputs since $\abs{\Integer_2 \times
\Integer_2} = 4$. Furthermore, as noted in the definition, any factor set must
also satisfy $f(a,e) = f(e,b) = e$ for all $a, b \in G$. In this case, that
means that $f(0, 0) = f(0, 1) = f(1, 0) = 1$. Thus, each factor set is defined
by the single value $f(1, 1)$. We will choose one extension to have $f(1, 1) =
1$ and the other to have $f(1, 1) = y$. Since $y$ is provided in the input, it
is clear that we can efficiently compute the value $f(a, b)$ for either of these
extensions.

We should also note that, for an $f$ so defined to be a 2-cocycle, it must
satisfy the additional (odd-looking) condition provided in the definition. This
condition ranges over three variables $a, b, c \in G$, and since $\abs{G} = 2$
in this case, this provides 8 equations that must be satisfied. It is a simple
matter to write these out for the two factor sets described above and verify
that these always hold, regardless of the value of $f(1, 1)$, so we have the
freedom to choose $f(1, 1) = y$ as above.

In addition to the oracle just described, our extension equivalence test
requires descriptions of the groups $A$, $G$, and $E$. For $G = \Integer_2$, we
can be compute a multiplication table in constant time (for the first quantum
model) or we can easily construct an oracle that computes group operations in
$\Integer_2$ in constant time (for the second quantum model). For $A =
\Integer_N^*$, we can produce a generating set (with high probability) by
choosing $O(\log N)$ random elements. To do this, we simply choose random
elements of $\Integer_N$ and then check that they are in $\Integer_N^*$ by
computing the GCD with $N$. It is well-known that this can be done efficiently,
and since there is only a $o(1)$ chance that this test fails, we can produce a
generating set in $O(\poly \log N)$ time. Finally, for the group $E$, we can
again choose $O(\log N)$ random elements (since $\abs{E} = 2\abs{A}$), and since
$E$ as a set is simply $\Integer_N^* \times \Integer_2$, we can choose a
uniformly random element of $E$ by choosing $x \in \Integer_N^*$ and $a \in
\Integer_2$ uniformly, then forming $(x, a)$.

The last input we must provide for extension equivalence is the isomorphism $E_i
/ \Integer_N^* \cong \Integer_2$. This is simply the function that maps $(x, a)
\mapsto a$.  Obviously, this can be performed efficiently.

Let $E_1$ be the extension with factor set $f_1$ having $f_1(1,1) = y$ and $E_2$
be the extension with $f_2$ having $f_2(1, 1) = 1$. Then we can see that $f_1
f_2^{-1} = f_1$. Thus, these extensions are equivalent iff there exists a
cochain $s : \Integer_2 \rightarrow \Integer_N^*$ such that $\partial s = f$. By
construction, any $s$ will ensure that $\partial s(0, 0) = \partial s(0, 1) =
\partial s(1, 0) = 1$ (otherwise, they would not be valid factor sets), so we
only need $\partial s(1, 1) = f_1(1, 1) = y$. Let $x = s(1)$.\footnote{
Any (normalized) 1-cochain $s$ must have $s(0) = 1$, so 1-cochains in this case
are in 1-to-1 correspondence with the element of $\Integer_N^*$ by the mapping
$s \mapsto s(1)$.} Then $\partial s(1, 1) = s(1) s(1) s(1 + 1)^{-1} = x \cdot x
\cdot 1^{-1} = x^2$. Thus, we can see that the extensions are equivalent iff
there exists an $x \in \Integer_N^*$ such that $x^2 = y$, i.e., iff $y$ is a
quadratic residue.
\end{proof}

Note that this example shows that extending even a constant-sized group (in this
case, $\abs{G} = 2$) by a large group can introduce substantial difficulty.

\subsection{Counting Equivalence Classes of Extensions} \label{ssec:count}

In this section, we show that it is possible to compute $\abs{H^2(G, A)}$, the
number of inequivalent extensions of $G$ by $A$, using the machinery developed
earlier for testing equivalence. The size $\abs{H^2(G, A)}$ is another quantity
that is sometimes computed by hand for extensions of small groups and would be
interesting to compute for larger groups.

We start first with the quantum algorithm, which takes $A$ as a black box group
and $G$ given by a multiplication table.

\begin{proof}[Proof of \Theorem{quantum-count}]
Since $H^2(G, A) \cong Z^2(G, A) / B^2(G, A)$, we can compute the size of the
former group from the sizes of the latter two. In fact, we computed $\abs{B^2(G,
A)}$ as part of our quantum algorithm for testing equivalence, so we know how
this can be done.

To compute $\abs{Z^2(G, A)}$, we use the fact that $Z^2(G, A)
= \Kernel \partial^2$, where $\partial^2 : C^2(G, A) \rightarrow B^3(G, A)$ is
similar to the map $\partial$ ($= \partial^1$) we used above. This map is a
surjection, so the first isomorphism theorem tells us that $B^3(G, A) \cong
C^2(G, A) / Z^2(G, A)$, which means that $\abs{Z^2(G, A)} = \abs{C^2(G, A)} /
\abs{B^3(G, A)}$. From the definition, we have $\abs{C^2(G, A)} =
\abs{A}^{\abs{G}^2}$.

To compute $\abs{B^3(G, A)}$, we can use the same approach as for $B^2(G, A)$:
we take a generating set for $C^2(G, A)$, which is simply $\abs{G}^2$ copies of
the generating set for $A$ and has size $O(\abs{G}^2 \log \abs{A})$; push this
forward into $B^3(G, A)$ by applying the map $\partial^2$, which has a simple
formula; and then invoke the algorithm for computing the size of an abelian
black box group. With $\abs{B^3(G, A)}$ in hand, we can compute $\abs{Z^2(G,
A)}$ and then $\abs{H^2(G, A)}$ by arithmetic. All of these steps can be done in
$O(\poly \abs{G} \poly \log \abs{A})$ time, so this gives an efficient
algorithm.
\end{proof}

Finally, we have a classical algorithm when $A$ and $G$ are given by
multiplication tables.

\begin{proof}[Proof of \Theorem{classical-count}]
We repeat the same approach as just described for the quantum algorithm of
computing $\abs{B^2(G, A)}$ and $\abs{B^3(G, A)}$. Now, our classical algorithm
for testing equivalence did not compute $\abs{B^2(G, A)}$ as part of its
operation. However, we did show how to efficiently construct a permutation
representation for $B^2(G, A)$, and it is well-known that we can compute the
size of a permutation group efficiently \cite{PermGroupAlg}, so we can compute
the size of this group classically as well.

We can also efficiently construct a generating set for $B^3(G, A)$, just as we
did above, by taking a generating set for $C^3(G, A)$ (in the same manner as we
did for $C^2(G, A)$ in the classical case) and pushing it forward using
$\partial^2$. We can compute the size of this group efficiently as well, using
the algorithm mentioned above, and then perform the same arithmetic as above.
\end{proof}

\section{Conclusion}

In this paper, we considered the problem of testing whether two extensions of a
group $G$ by an abelian group $A$ are the same or ``equivalent.'' If both
$\abs{A}$ and $\abs{G}$ are small, then we showed that there exists an efficient
(classical) Monte Carlo algorithm for testing equivalence. On the other hand, if
$\abs{A}$ is so large that $A$ can only be provided as a black box and either
$\abs{G}$ is small or $\abs{G}$ is large and abelian, then there is still
an efficient quantum algorithm for testing equivalence, whereas no efficient
classical algorithm exists, under the assumption that there is no efficient
classical algorithm for testing quadratic residuosity.

As mentioned in the introduction, one of the motivations for studying this
problem is its relationship to the group isomorphism problem, an important open
problem in computer science. Hence, it is worth considering what light these
results shed on the group isomorphism problem.

While the isomorphism problem applies to arbitrary groups, it is widely believed
that the case of 2-nilpotent groups contains the essential hard cases.  Any such
groups are central extensions, and hence, we can apply our classical algorithm
above to test their equivalence. If the two extensions are equivalent, then they
are isomorphic. However, the opposite does not hold.

We can conclude from this that, if it is the case that testing isomorphism of
2-nilpotent groups is hard, then the hardness must come from extensions that are
isomorphic but inequivalent. Hence, it behooves us to understand further the
computational complexity of distinguishing such extensions.

\subparagraph*{Acknowledgements}

The author would like to thank Aram Harrow for many useful discussions, much
encouragement, and careful feedback on earlier drafts of this paper. Funding was
from NSF grants CCF-0916400 and CCF-1111382.

\appendix

\section{Quantum Algorithm for Computing Group Size}
\label{sec:group-size}

The quantum algorithm in subsection~\ref{ssec:quantum-alg} requires a subroutine
that computes the size of a black box group. Earlier, we cited the algorithm and
analysis of \cite{DecompAbGroups,QuantumCompAlg} but skipped some of the finer
details of how the theorems from those papers translate into a running time for
this subroutine in our algorithm. In this section, we fill in those missing
details.

The algorithm of \cite{DecompAbGroups} is not explicitly for computing the size
of the group. Rather, it is for decomposing the group into a direct product of
cyclic groups. That is, it produces a set of generators, one for each of the
direct factors. However, it is easy to compute the size of the group from this
information.

In particular, the size of the group is simply the product of the sizes of the
direct factors, and since each of these is a cyclic group, the size of each
direct factor is simply the order of the generator. Hence, we can get the size
of the group from the output of this algorithm by invoking an order finding
subroutine.

Finding order is a special case of the algorithm for computing the period of a
function, which is also described and analyzed in \cite{QuantumCompAlg}. In our
case, the function whose period we want to find is the map $n \mapsto g^n$,
where $g \in A$ is the generator whose order we are computing.  Since the order
of $g$ is bounded by $\abs{A}$, the method of repeated squaring allows us to
compute this map with $O(\log \abs{A})$ calls to the oracle for $A$.

The quantum period finding algorithm makes only one call to the function just
described, taking $O(\log \abs{A})$ time. However, it must also perform
$O(\log^2 \abs{A})$ post-processing, which dominates the running time.

To compute the size of our group, we need to find the order of all $O(\abs{G}^2
\log \abs{A})$ generators, which we can see takes $O(\abs{G}^2 \log^3 \abs{A})$
time. This is adds only a lower order term to the overall running time.

That completes the discussion of our own post-processing to compute the size of
the group. However, we will also need to perform some pre-processing.

The algorithm described in \cite{QuantumCompAlg} requires that all of the given
generators have order that is $p^k$ for some fixed prime $p$. This is done in
order to reduce the amount of quantum computation that is needed because
separation into different $p$-groups can be done classically, as we will now
describe.

We start by finding the order of each generator. As noted above, this takes
$O(\abs{G}^2 \log^3 \abs{A})$ time. Next, we factor the order using Shor's
algorithm \cite{ShorFactoring}, which takes $O(\log^3 \abs{A})$ time.
Now, suppose that the order of $g$ is $r = p_1^{j_1} \dots p_k^{j_k}$. Then, if
we let $q_\ell = \prod_{i \not= \ell} p_i^{j_i}$, then we can see that the order
of $g^{q_\ell}$ is $p_\ell^{j_\ell}$. Furthermore, we know from the Chinese
remainder theorem that any $x \in \Integer_r$ is uniquely determined by the
values $x \bmod p_\ell^{j_\ell}$ for each $\ell$. Hence, any power of $g$ can be
written uniquely as a product of powers of $g^{q_1}, \dots, g^{q_k}$.

We now have a generating set for which we know the prime power order of
each element. Thus, we can separately pass the generators for each $p$-subgroup
(those whose order is a power of $p$) to the algorithm from
\cite{DecompAbGroups}. The structure theorem for finite abelian groups tells us
that our group is a direct product of the $p$-subgroups, so we can simply
multiply their sizes to get the size of the whole group.

We can see that this pre-processing adds only a lower order term to the running
time of the algorithm. While our generating set for the whole group may have
grown, each generator adds at most a single generator to the set for each
$p$-subgroup, so the running time of the group decomposition algorithm that we
analyzed before is unchanged. The one difference is that we may need to invoke
that algorithm as many as $\log \abs{A}$ times, so this adds a factor of $\log
\abs{A}$ to our bound on the running time.

Finally, we should note that the decomposition algorithm described in
\cite{QuantumCompAlg} also mentions $O(k^2 \log q)$ classical group
multiplications (meaning multiplication in the group $\Integer_{\abs{A}}$).
This is dominated by the $O(k^3 \log q)$ part of the post-processing, which
works in the same group, so it does not add to the overall running time.

\bibliographystyle{hplain}
\bibliography{refs}

\end{document}